\theoremstyle{definition}
\newtheorem{definition}{Definition}[section]
\newtheorem{theorem}[definition]{Theorem}
\newtheorem{proposition}[definition]{Proposition}
\newtheorem{corollary}[definition]{Corollary}
\newtheorem{example}[definition]{Example}
\newtheorem{remark}[definition]{Remark}
\newcommand{\fm}[1]{\textbf{L}_{#1}}
\newcommand{\sder}{\vdash}
\newcommand{\conn}{\copyright}
\newcommand{\pnmat}[1]{\mathbb{#1}}
\newcommand{\Mt}{\mathbb{M}}
\newcommand{\tuple}[1]{\langle #1 \rangle}
\newcommand{\qi}{q_{\mathsf{init}}}
\newcommand{\inc}[1]{#1^{++}}
\newcommand{\test}[1]{#1^{\mathsf{test}}}
\newcommand{\step}{\mathsf{step}}
\newcommand{\cep}{\epsilon}
\newcommand{\cz}{\mathsf{zero}}
\newcommand{\suc}{\mathsf{suc}}
\newcommand{\enc}{\mathsf{enc}}
\newcommand{\error}{\mathtt{error}}
\newcommand{\conf}{\mathtt{conf}}
\newcommand{\init}{\mathtt{init}}
\newcommand{\n}{\mathtt{r}}
\newcommand{\Num}{\textbf{Rm}}
\newcommand{\seq}{\mathsf{seq}} 
\newcommand{\mC}{\mathcal{C}}
\newcommand{\nxt}{\textsf{nxt}} 
\newcommand{\Conf}{\mathbf{Conf}} 
\newcommand{\LMt}{\Mt_{\text{\L}}}
\newcommand{\Rm}{\textbf{Rm}}
\newcommand{\zero}{\mathsf{zero}}
\DeclareMathOperator{\Var}{\mathsf{Var}}
\DeclareMathOperator{\Sub}{\mathsf{Sub}}
\DeclareMathOperator{\Val}{\mathsf{Val}}
\title{Monadicity of Non-deterministic Logical\\Matrices is Undecidable}
\author{Pedro Filipe \quad Carlos Caleiro \quad S\'ergio Marcelino
\footnote{Research funded by FCT/MCTES through national funds and when applicable co-funded by EU under the project UIDB/50008/2020. The first author acknowledges the grant 
PD/BD/135513/2018 by FCT, under the LisMath PhD programme.}
\institute{SQIG - Instituto de Telecomunica\c{c}\~oes}
\institute{Dep. Matem\'atica, Instituto Superior T\'ecnico,\\ Universidade de Lisboa, Portugal}
\email{pedro.g.filipe@tecnico.ulisboa.pt \quad \{ccal,smarcel\}@math.tecnico.ulisboa.pt}
}
\begin{document}
\maketitle

\begin{abstract}
The notion of non-deterministic logical matrix (where connectives are interpreted as multi-functions) preserves many good properties of traditional semantics based on
logical matrices (where connectives are interpreted as functions) whilst finitely characterizing a much wider class of logics, and has proven to be decisive in a myriad
of recent compositional results in logic. Crucially, when a finite non-deterministic matrix satisfies monadicity (distinct truth-values can be separated by unary formulas) one can automatically produce an
axiomatization of the induced logic. Furthermore, the resulting calculi are analytical and enable algorithmic proof-search and symbolic counter-model generation.

For finite (deterministic) matrices it is well known that checking monadicity is decidable. We show that, in the presence of non-determinism, the property becomes 
undecidable. As a consequence, we conclude that there is no algorithm for computing the set of all multi-functions expressible in a given finite Nmatrix.
The undecidability result is obtained by reduction from the halting problem for deterministic counter machines.
\end{abstract}

\section{Introduction}\label{sec:intro}
Logical matrices are arguably the most widespread semantic structures for propositional logics~\cite{WojBook,AlgLogBook}. After {\L}ukasiewicz, 
a logical matrix consists in an underlying algebra, functionally interpreting logical connectives over a set of truth-values, together with a designated set of truth-values. 
The logical models (valuations) are obtained by considering homomorphisms from the free-algebra in the matrix similarity type into the algebra, and formulas that hold in 
the model are the ones that take designated values.

However, in recent years, it has become clear that there are advantages in departing from semantics based on logical matrices, by adopting a non-deterministic generalization 
of the standard notion. Non-deterministic logical matrices (Nmatrices) were introduced in the beginning of this century by Avron and his collaborators~\cite{Avron0,Avron1}, 
and interpret connectives by multi-functions instead of functions. The central idea is that a connective can non-deterministically pick from a set of possible values instead 
of its value being completely determined by the input values. Logical semantics based on Nmatrices are very malleable, allowing not only for finite characterizations of 
logics that do not admit finite semantics based on logical matrices, but also for general recipes for various practical problems in logic~\cite{wollic17}. Further, Nmatrices 
still permit, whenever the underlying logical language is sufficiently expressive, to extend from logical matrices general techniques for effectively producing analytic 
calculi for the induced logics, over which a series of reasoning activities in a purely symbolic fashion can be automated, including proof-search and counter-model 
generation~\cite{SS,Avron0,Avron1,AKZ,CCALJM,synthese,wollic19}. In its simplest form, the sufficient expressiveness requirement mentioned above corresponds to
\emph{monadicity}~\cite{SS,synthese,wollic19}. A Nmatrix is \emph{monadic} when pairs of distinct truth-values can be separated by unary formulas of the logic. This crucial 
property is decidable, in a straightforward way, for finite logical matrices, as one can simply compute the set of all unary functions expressible in a given finite matrix. 
However, the computational character of monadicity with respect to Nmatrices has not been studied before.

In this paper we show that, in fact, monadicity is undecidable for Nmatrices. Our proof is obtained by means of a suitable reduction from the halting problem for counter 
machines, well known to be undecidable~\cite{minsky}. Several details of the construction are inspired by results about the inclusion of infectious values in
Nmatrices~\cite{ismvl}, and also by undecidability results concerning term-dag-automata (a computational model that bears some interesting connections with
Nmatrices)~\cite{ClosureProperties}. As a consequence, we conclude that the set of all unary multi-functions expressible in a given finite Nmatrix is not computable.

The paper is organized as follows. In Section~\ref{sec:prelim} we introduce and illustrate our objects of study, namely logical matrices and Nmatrices, the logics they 
induce, and the monadicity property. Section~\ref{sec:machines} recalls the counter machine model of computation, and shows how its computations can be encoded into suitable 
Nmatrices. Finally, Section~\ref{sec:undecided} establishes our main results, namely the undecidability of monadicity for Nmatrices, and as a corollary the uncomputability of 
expressible multi-functions. We conclude, in Section~\ref{sec:conclude}, with a discussion of the importance of the results obtained and several topics for further research.

\section{Preliminaries}\label{sec:prelim}
In this section we recall the notion of logical matrix, non-deterministic matrix, and their associated logics. We also introduce, exemplify and discuss the notion of 
monadicity.

\paragraph*{Matrices, Nmatrices and their logics}
A signature $\Sigma$ is a family of connectives indexed by their arity, $\Sigma=\{\Sigma^{(k)}:k\in \mathbb{N}\}$.
The set of formulas over $\Sigma$ based on a set of propositional variables $P$ is denoted by $L_\Sigma(P)$.
The set of subformulas (resp, variables) of a formula $\varphi\in L_\Sigma(P)$ is denoted by $\Sub(\varphi)$ (resp.,$\Var(\varphi)$).
There are two subsets of $L_\Sigma(P)$ that will be of particular interest to us: the set of closed formulas, denoted by $L_\Sigma(\emptyset)$, and the set of unary
(or monadic) formulas, denoted by $L_\Sigma(\{p\})$.

A $\Sigma$-Nmatrix, is a tuple $\Mt=\tuple{A,\cdot_\Mt,D}$ where $A$ is the set of truth-values, $D\subseteq A$ is the set of designated truth-values, and for each
$\conn\in \Sigma^{(k)}$, the function $\conn_\Mt:A^k\to \wp(A)\setminus\{\emptyset\}$ interprets the connective $\conn$. A $\Sigma$-Nmatrix $\Mt$ is finite if it contains 
only a finite number of truth-values and $\Sigma$ is finite. Clearly, this definition generalizes the usual definition of logical matrix, which is recovered when, for every 
$\conn\in\Sigma^{(k)}$ and $a_1,\ldots,a_k \in A$, $\conn_{\Mt}(a_1,\ldots,a_k)$ is a singleton. In this case we will sometimes refer to $\conn_{\Mt}$ simply as a function.
A valuation  over $\pnmat{M}$ is a function $v: L_{\Sigma}(P) \rightarrow A$, such that, $v(\conn(\varphi_1, \dots, \varphi_n)) \in \conn_{\pnmat{M}}(v(\varphi_1), \dots,
v(\varphi_n))$ for every $\conn\in \Sigma^{(k)}$.
We use $\Val(\Mt)$ to denote the set of all valuations over $\Mt$.
A valuation $v \in \Val(\Mt)$ is said to satisfy a formula $\varphi$ if $v(\varphi) \in D$, and is said to falsify $\varphi$, otherwise.
Note that every formula $\varphi\in L_{\Sigma}(P)$, with $\Var(\varphi)=\{p_1,\ldots,p_k\}$, defines a multi-function $\varphi_\Mt:A^{k}\to \wp(A)\setminus\{\emptyset\}$ as
$\varphi_\Mt(x_1,\ldots,x_k)=\{v(\varphi):v\in \Val(\Mt),v(p_i)=x_i,1\leq i\leq k\}$. The multi-function $\varphi_\Mt$ is said to be \emph{represented}, or
\emph{expressed}, by the formula $\varphi$ in $\Mt$. Furthermore, we say that a multi-function $f$ is \emph{expressible} in an Nmatrix $\Mt$ if there is a formula $\varphi$ 
such that $\varphi_{\Mt} = f$.

The logic induced by an Nmatrix $\Mt$ is the Tarskian consequence relation $\vdash_\Mt\,\subseteq \wp(L_\Sigma(P))\times L_\Sigma(P)$ defined as $\Gamma\vdash_\Mt \varphi$ 
whenever, for every $v \in \Val(\Mt)$, if $v(\Gamma)\subseteq D$ then $v(\varphi)\in D$. This definition generalizes the usual logical matrix
semantics~\cite{WojBook,AlgLogBook}.
As usual, a formula $\varphi$ is said to be a theorem of $\Mt$ if $\emptyset\vdash_\Mt \varphi$.

\paragraph*{Monadicity}
Given a $\Sigma$-Nmatrix $\Mt=\tuple{A,\cdot_\Mt,D}$, we say that $a,b \in A$ are separated, written $a\# b$ if $a \in D$ and $b \notin D$, or vice-versa. A pair of sets of 
elements $X,Y \subseteq A$ are separated, written $X\# Y$, if for every $a \in X$ and $b \in Y$ we have that $a\#b$. Note that $X\#Y$ precisely if $X \subseteq D$ and
$Y \subseteq A\setminus D$, or vice versa.
A monadic formula $\varphi\in L_{\Sigma}(\{p\})$ such that $\varphi_{\Mt}(a)\#\varphi_{\Mt}(b)$ is said to separate $a$ and $b$.
We say that a set of monadic formulas $\mathsf{S}$ is a set of monadic separators for $\Mt$ when, for every pair of distinct truth-values of $\Mt$, there is a formula of 
$\mathsf{S}$ separating them.
An Nmatrix $\Mt$ satisfies monadicity (or simply, is monadic) if there is a set of monadic separators for $\Mt$. 
\begin{example}\label{ex:luk}
    Consider the signature $\Sigma = \{\neg, \vee, \wedge, \rightarrow\}$ and the $\Sigma$-matrix $\Mt_{\text{\L}} = \tuple{A,\cdot_{\text{\L}},D}$,
    with $A = \{0,\frac{1}{2},1\}$ and $D = \{1\}$, corresponding to {\L}ukasiewicz $3$-valued logic, with interpretations as described in the following tables.
    \begin{center}
        \begin{tabular}{c | c}
            $x$ & $\neg_{\text{\L}}(x)$ \\
            \hline
            $0$           & $1$ \\
            $\frac{1}{2}$ & $\frac{1}{2}$ \\
            $1$           & $0$ \\
        \end{tabular}
        \quad
        \begin{tabular}{c | c c c}
            $\vee_{\text{\L}}$   & $0$ & $\frac{1}{2}$ & $1$ \\
            \hline
            $0$           & $0$ & $\frac{1}{2}$ & $1$ \\
            $\frac{1}{2}$ & $\frac{1}{2}$ & $\frac{1}{2}$ & $1$ \\
            $1$           & $1$ & $1$ & $1$ \\
        \end{tabular}
        \quad
        \begin{tabular}{c | c c c}
            $\wedge_{\text{\L}}$   & $0$ & $\frac{1}{2}$ & $1$ \\
            \hline
            $0$           & $0$ & $0$ & $0$ \\
            $\frac{1}{2}$ & $0$ & $\frac{1}{2}$ & $\frac{1}{2}$ \\
            $1$           & $0$ & $\frac{1}{2}$ & $1$ \\
        \end{tabular}
        \quad
        \begin{tabular}{c | c c c}
            $\rightarrow_{\text{\L}}$   & $0$ & $\frac{1}{2}$ & $1$ \\
            \hline
            $0$           & $1$ & $1$ & $1$ \\
            $\frac{1}{2}$ & $\frac{1}{2}$ & $1$ & $1$ \\
            $1$           & $0$ & $\frac{1}{2}$ & $1$ \\
        \end{tabular}
    \end{center}
    $\LMt$ is monadic with $
    \{p, \neg p\}$ as a set of separators. Indeed, $p$ separates $1$ from $0$, and also $1$ from $\frac{1}{2}$, whereas $\neg p$ separates $0$ and $\frac{1}{2}$. One may 
    wonder, though, whether one could separate $0$ and $\frac{1}{2}$ without using negation.
    \hfill$\triangle$
\end{example}

\begin{remark}\label{remark:MonadicityIsDecidableForMatrices}
Notice that we can decide if a given matrix $\Mt$ is monadic by algorithmically generating every unary function expressible in $\Mt$, as it is usually done when calculating 
clones over finite algebras \cite{Lau}. This procedure is, however, quite expensive, since there are, at most, $n^n$ unary formulas from a set with $n$ values to itself.
The next example illustrates this procedure.
\end{remark}

\begin{example}\label{ex:luk2}
    Let $\LMt'$ be the $\{\vee, \wedge, \rightarrow\}$-reduct of $\LMt$ introduced in Example~\ref{ex:luk}, corresponding to the negationless fragment of {\L}ukasiewicz 
    $3$-valued logic. Let us show that $\LMt'$ is not monadic, by generating every unary expressible function.
    For simplicity, we represent a unary function $h : A \rightarrow A$ as a tuple $(h(0),h(\frac{1}{2}),h(1))$. 
    
    The formulas $p$, $p \vee p$ and $p \wedge p$ define the same function $(0, \frac{1}{2}, 1)$ and the formula $p \rightarrow p$ defines the constant function $(1,1,1)$. 
    It is easy to see that we cannot obtain new functions by further applying connectives, so $(0,\frac{1}{2},1)$ and $(1,1,1)$ are the only expressible unary functions.
    For example, $((p\rightarrow p)\rightarrow(p\rightarrow p))_{\LMt'} = (1,1,1)$. We conclude that $0$ cannot be separated from $\frac{1}{2}$, and so $\LMt'$ is not monadic.
    \hfill$\triangle$
\end{example}

\paragraph*{Monadicity in Nmatrices}
In the presence of non-determinism, we cannot follow the strategy described in Remark~\ref{remark:MonadicityIsDecidableForMatrices} and Example~\ref{ex:luk2}. A fundamental difference from the deterministic case is that the multi-functions represented 
by formulas in a Nmatrix are sensitive to the syntax since, even if there are multiple choices for the value of a subformula, all its occurrences need to 
have the same value. Crucially, on an Nmatrix $\Mt$, the multi-function  $\conn(\varphi_1,\ldots,\varphi_k)_\Mt$ does not depend only on the multi-functions $\conn_\Mt$ and
$(\varphi_1)_\Mt$, \ldots, $(\varphi_k)_\Mt$, as we shall see in the next example.

Hence, contrarily to what happens in the deterministic case, when generating the expressible multi-functions in an Nmatrix $\Mt$ (to find if $\Mt$ is monadic, or for any 
other purpose), we cannot just keep the information about the multi-functions themselves but also about the formulas that produce them. Otherwise, we might generate a 
non-expressible function (as every occurrence of a subformula must have the same value) or miss some multi-functions that are still expressible.


With the intent of making the notation lighter, when representing the interpretation of the connectives using tables, we drop the curly brackets around the elements of an output 
set. For example, in the table of Example~\ref{example:SubformulaDependenceOnExpressedFunctions} we simply write $a,c$ instead of $\{a,c\}$.

\begin{example}\label{example:SubformulaDependenceOnExpressedFunctions}
    Consider the signature $\Sigma$ with only one binary connective $g$, and two Nmatrices $\Mt=\tuple{\{a,b,c\},\cdot_{\Mt},D}$ and
    $\Mt'=\tuple{\{a,b,c\},\cdot_{\Mt'},\{c\}}$, with interpretation described in the following 
    table.
    \begin{center}
        \begin{tabular}{c | c c c }
            $g_\Mt$ & $a$& $b$ & $c$  \\
            \hline
            $a$&  $c$ &$ a $ & $ b,c $  \\
            $b$&  $b$ &$c $ & $ a,c $  \\
            $c$ &$ b,c$  &$a,c$ & $ c$  
        \end{tabular}
        \qquad\qquad
        \begin{tabular}{c | c c c }
            $g_{\Mt'}$ & $a$& $b$ & $c$  \\
            \hline
            $a$&  $c$ &$ a $ & $ c $  \\
            $b$&  $b$ &$c $ & $ a $  \\
            $c$ &$ b,c$  &$a,c$ & $ c$  
        \end{tabular}
    \end{center}
    Let $\varphi=g(g(p,p),p)$ and $\psi=g(p,g(p,p))$. In $\Mt$, $\varphi_{\Mt} = \psi_{\Mt} = (\{b,c\},\{a,c\},\{c\})$.
    Although these formulas define the same multi-function, they are not interchangeable, as $g(\varphi,\varphi)_\Mt=g(\psi,\psi)_\Mt=g(p,p)_\Mt = (\{c\},\{c\},\{c\})$ but
    $g(\varphi,\psi)_\Mt=g(\psi,\varphi)_\Mt=(\{a,c\},\{b,c\},\{c\})$, thus illustrating the already mentioned sensitivity to the syntax.
    Still, consider $v_x : L_{\Sigma}(P) \to \{a,b,c\}$, with $x \in \{a,b,c\}$, defined as
    \begin{equation*}
        v_x(\gamma) =
        \begin{cases}
            x &\text{if } \gamma \in P \\
            c &\text{otherwise}.
        \end{cases}
    \end{equation*}
    These functions can easily be shown to be valuations over $\Mt$ for every choice of $x$ and so, for every unary formula $\varphi \neq p$, we have that $c \in \varphi_{\Mt}(a) 
    \cap \varphi_{\Mt}(b) \cap \varphi_{\Mt}(c)$. We conclude that, apart from $p$, no unary formula can separate elements of $\{a,b,c\}$, and so $\Mt$ is not monadic, for 
    any choice of $D$.

    In $\Mt'$ the outcome is radically different. As $g(p,p)_{\Mt'}(x) = \{c\}$ for 
    every $x \in \{a,b,c\}$, we have $g(p,g(p,p))_{\Mt'}(a) = \{c\}$ and $g(p,g(p,p))_{\Mt'}(b) = \{a\}$ and so, in this case, $\Mt'$ is monadic with set of separators
    $\{p,g(p,g(p,p))\}$.
    \hfill$\triangle$
\end{example}



\section{Counter machines and Nmatrices}\label{sec:machines}
In this section we recall the essentials of counter machines, and define a suitable finite Nmatrix representing the computations of any given counter machine.
\subsection*{Counter machines}
A \emph{(deterministic) counter machine} is a tuple $\mC = \langle n, Q, \qi, \delta \rangle$, where $n \in \mathbb{N}$ is the number of counters, $Q$ is a finite set of 
states, $\qi \in Q$ is the initial state, and $\delta$ is a partial transition function $\delta:Q\not\to  (\{\inc{i} : 1 \leq i \leq n\} \times Q) \cup (\{\test{i} : 1 \leq 
i \leq n\} \times Q^2)$.
    
The set of halting states of $\mC$ is denoted by  $H=\{q\in Q:\delta(q)\text{ is undefined}\}$. 

A configuration of $\mC$ is a tuple $C=(q,\vec{x}) \in Q \times \mathbb{N}_0^n$, where $q$ is a state, and $\vec{x}=x_1,\ldots,x_n$ are the values of the counters. 
Let $\mathsf{Conf}(\mC)$ be the set of all configurations.
$C\in \mathsf{Conf}(\mC)$ is said to be the initial configuration if $q=\qi$ and $\vec{x}=\vec{0}$. $C$ is said to be a halting configuration if $q \in H$. 

When $(q,\vec{y})$ is not a halting configuration, the transition function $\delta$ completely determines the next configuration $\textsf{nxt}(q,\vec{y})$ as follows:
\begin{equation*}
    \nxt(q, \vec{y}) =
    \begin{cases}
        (q', \vec{y} + \vec{\mathsf{e}_i}) &\text{if } \delta(q) = \tuple{\inc{i}, q'}, \\
        (q'', \vec{y}) &\text{if } \delta(q) = \tuple{\test{i}, q'', q'''} \text{ and } x_i = 0, \\
        (q''', \vec{y} - \vec{\mathsf{e}_i}) &\text{if } \delta(q) = \tuple{\test{i}, q'', q'''} \text{ and } x_i \neq 0,
    \end{cases}
\end{equation*}
where $\vec{\mathsf{e}_i}$ is such that $(\mathsf{e}_i)_i = 1$ and $(\mathsf{e}_i)_j = 0$, for all $j \neq i$.

The computation of $\mC$ is a finite or infinite sequence of configurations $\tuple{C_i}_{i<\eta}$, where $\eta\in\mathbb{N}_0\cup\{\omega\}$ such that $C_0$ is the initial 
configuration, and for each $i<\eta$, either $C_i$ is a halting configuration and $i+1=\eta$ is the length of the computation, or else $C_{i+1}=\textsf{nxt}(C_i)$. \pagebreak

The intuition behind the transitions of a counter machine is clear from the underlying notion of computation, and in particular the definition of the next configuration. 
Clearly, $\delta(q)=\tuple{\inc{i},r}$ results in incrementing the $i$-th counter and moving to state $r$, whereas $\delta(q)=\tuple{\test{i},r,s}$ either moves to state $r$,
leaving the counters unchanged, when the value of the $i$-th counter is zero, or moves to state $s$, and decrements the $i$-th counter, when its value is not zero. As usual in 
counter machine models (see~\cite{minsky}), and also for the sake of simplicity, we are assuming that in the initial configuration all counters have value 0. This is well 
known not to hinder the computational power of the model, as a machine can always start by setting the counters to other desired input values. We will base our undecidability 
result on the following well known result about counter machines.

\begin{theorem}[\cite{minsky}]\label{thm:Halt}
    It is undecidable if a given counter machine halts when starting with all counters set to zero.
\end{theorem}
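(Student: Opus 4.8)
The plan is to prove this by reduction from the halting problem for Turing machines, which is undecidable. It suffices to show that counter machines of the form introduced above are Turing-complete, in the sense that from any Turing machine $M$ one can effectively build a counter machine $\mC_M$ that, started with all counters set to zero, halts if and only if $M$ halts on the empty input. Since deciding the latter is impossible, deciding whether $\mC_M$ halts must be impossible as well. This is essentially the classical argument of Minsky.

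First I would simulate $M$ by a counter machine with several counters. The idea is to encode an entire tape configuration numerically: the contents of the tape to the left and to the right of the head are each read as a natural number written in base $b$, where $b$ exceeds the size of the tape alphabet, and the current control state of $M$ is tracked by the state of the counter machine. A single step of $M$---reading the scanned symbol, overwriting it, and moving the head---then translates into a bounded sequence of arithmetic operations on these numbers: extracting the least significant digit (division with remainder by $b$), prepending a digit (multiplication by $b$ and addition), and so on. Each such arithmetic primitive is in turn implemented from the available instructions $\inc{i}$ and $\test{i}$, using auxiliary counters as scratch space that are restored to zero before the next $M$-step begins.

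To match the two-instruction model exactly one can then compress the number of counters, following Minsky's encoding: the values $c_1,\ldots,c_k$ of several counters are stored as the single number $p_1^{c_1}\cdots p_k^{c_k}$, with $p_1,\ldots,p_k$ distinct primes, held in one counter, while a second counter serves as workspace. Under this encoding, incrementing the $i$-th logical counter becomes multiplication by $p_i$, and testing-and-decrementing it becomes a divisibility test against $p_i$ followed, when possible, by division by $p_i$; both multiplication and division by a constant are realizable with two counters via repeated increment and test-and-decrement loops.

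Composing these simulations---preceded by a short initialization phase that, from the all-zero start required by our model, produces the encoded empty-tape configuration of $M$---the constructed machine $\mC_M$ faithfully mirrors the computation of $M$, halting exactly when $M$ halts and diverging otherwise, which yields the reduction. The delicate part is precisely this faithfulness: one must verify that the arithmetic subroutines terminate and leave all scratch counters at zero, so that no spurious halting or divergence is introduced, and that the encoded configuration after each simulated step is the correct successor. This bookkeeping, rather than any single clever idea, is the main technical burden of the argument.
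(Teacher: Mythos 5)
The paper does not prove this statement at all: it is quoted as a known result with a citation to Minsky, and the undecidability of halting for counter machines is simply imported as the starting point of the reduction. Your sketch is the classical argument behind that citation --- simulate a Turing machine on empty input by a multi-counter machine encoding the two tape halves as base-$b$ numbers, with the arithmetic primitives built from $\inc{i}$ and $\test{i}$ --- and it is correct in outline, with the genuine burden (termination of the subroutines and restoration of scratch counters) correctly identified rather than hidden. One remark: the final compression to two counters via the prime encoding $p_1^{c_1}\cdots p_k^{c_k}$ is unnecessary for the statement as formulated here, since the paper's model already permits an arbitrary number $n$ of counters while using only the two instruction types $\inc{i}$ and $\test{i}$; the multi-counter simulation alone completes the reduction. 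Including the compression does no harm, but you are conflating ``two instructions'' with ``two counters'' when you motivate it.
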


In what follows, given $\vec{x} \in A^n$ and $f : A \rightarrow B$, we define $f(\vec{x}) \in B^n$ as $f(\vec{x}) = \tuple{f(x_i) : 1 \leq i \leq n}$.
For a given counter machine $\mC = \langle n, Q, \qi, \delta \rangle$, we define the signature $\Sigma_{\mC}$ such that $\Sigma_{\mC}^{(0)} = \{\cz, \cep\}$,
$\Sigma_{\mC}^{(1)} = \{\suc\}$, $\Sigma_{\mC}^{(n+1)} = \{\step_q : q \in Q\}$ and $\Sigma_{\mC}^{(j)} = \emptyset$, for all $j \notin \{0,1,n+1\}$.
As usual, $\cz$ and $\suc$ allow us to encode every $k \in \mathbb{N}_0$ as the closed formula $\enc(k)=\suc^k(\cz)$. 
Moreover, we can encode every finite sequence of configurations $\tuple{C_0,\ldots,C_k}$ as a sequence formula in the following way:
\begin{itemize}
    \item $\seq(\tuple{}) = \epsilon$,  and $\seq(\tuple{C_0,\ldots,C_{k-1},(q, \vec{y})}) = \step_q(\seq(\tuple{C_0,\ldots,C_{k-1}}),\enc(\vec{y}))$.
\end{itemize}


We will construct a finite Nmatrix $\Mt_\mC$ over $\Sigma_\mC$ that recognizes as a theorem precisely the finite computation of $\mC$, if it exists.
This means that $\Mt_\mC$ can only falsify a formula $\varphi$ if it is not a sequence formula, or if $\varphi = \seq(\tuple{C_0,\ldots,C_k})$ but 
$C_0$ is not the initial configuration of $\mC$, or $C_k$ is not a halting configuration of $\mC$, or $\nxt(C_i)\neq C_{i+1}$ for some $0 \leq i < k$.\\[-8mm]


\subsection*{From counter machines to Nmatrices}
For a given counter machine $\mathcal{C} = \tuple{n, Q, \qi, \delta}$ let
\begin{equation*}
    \Rm=\{\n_{= 0},\n_{\geq 0},\n_{\geq 1},\n_{\geq 2}\} \qquad\text{ and }\qquad
    \Conf=\{\mathtt{conf}_{q,\overrightarrow{\n}} :q \in Q,\overrightarrow{\n}\in \Rm^n\}
\end{equation*}
and consider the Nmatrix $\Mt_\mathcal{C} = \tuple{A, \cdot_{\Mt_\mathcal{C}}, D}$ over $\Sigma_\mathcal{C}$, where 
\begin{align*}
    A =  \Rm \cup \Conf \cup  \{\mathtt{init},\mathtt{error}\} \qquad\text{ and }\qquad
    D= \{\mathtt{conf}_{q,\overrightarrow{\n}} :q \in H,\overrightarrow{\n}\in \Rm^n\}\qquad\text{ and }
\end{align*}
\begin{equation*}
    \zero_{\Mt_\mathcal{C}}  = \{\n_{= 0},\n_{\geq 0}\} \qquad \epsilon_{\Mt_\mC} = \{\mathtt{init}\} \qquad
    \suc_{\Mt_\mathcal{C}}(x) =
    \begin{cases}
        \{\n_{\geq 1}\} &\quad\text{if } x = \n_{= 0} \\
        \{\n_{\geq 0},\n_{\geq 1}\} &\quad\text{if } x = \n_{\geq 0} \\
        \{\n_{\geq 2}\} &\quad\text{if } x \in\{ \n_{\geq 1},\n_{\geq 2}  \}\\
        \{\mathtt{error}\} &\quad\text{otherwise}
    \end{cases}
\end{equation*}
\begin{equation*}
    (\step_q)_{\Mt_\mathcal{C}}(x,\vec{z}) =
    \begin{cases}
        \{\mathtt{conf}_{q,\vec{z}}\}
        &\text{if } x=\mathtt{init},q=q_{\mathsf{init}}\text{ and }\vec{z}\in \{\n_{= 0}\}^n\cup\{\n_{\geq 0}\}^n, \text{ or} \\
        &\text{if } x=\mathtt{conf}_{q',\overrightarrow{y}},  \overrightarrow{z} \in \Rm^n,  \text{ and }\\
        &\,\,\delta(q')=\tuple{\test{i},q,s},y_i\in\{\n_{= 0},\n_{\geq 0}\}\text{ and }\overrightarrow{y}=\overrightarrow{z},\text{ or }\\
        &\,\,\delta(q')=\tuple{\test{i},s,q},y_i\in \mathsf{suc}_{\Mt_\mathcal{C}}(z_i) \text{ and } z_j=y_j  \text{ for }j\neq i , \text{ or } \\
        &\,\,\delta(q')=\tuple{\inc{i},q},z_i\in \mathsf{suc}_{\Mt_\mathcal{C}}(y_i) \text{ and } z_j=y_j  \text{ for }j\neq i \\
        \{\mathtt{error}\} &\text{otherwise}
        \end{cases}
\end{equation*}
where, $s \in Q$ represents an arbitrary state.\pagebreak

$\Mt_\mC$ is conceived as a finite way of representing the behavior of $\mC$. For that purpose, it is useful to understand the operations 
$\mathsf{zero}$ and $\mathsf{suc}$ as means of representing the natural number values of the counters. Their interpretation, however, is finitely defined over the abstract 
values $\Num$. In fact, in order to check if some formula $\varphi$ encodes a sequence of computations respecting $\nxt$, it is not essential to distinguish all natural 
values. Indeed, it is easy to conclude from 
the definition of counter machine that in each computation step its counters either retain their previous values, or else they are incremented or decremented. As we set the 
initial configuration with all counters set to zero and the effect of test transitions also depends on detecting zero values, it is sufficient to being able to characterize 
unambiguously the value $0$ and, additionally, being able to recognize pairs of values whose difference is larger than one. This is successfully accomplished with the 
proposed non-deterministic interpretation of $\mathsf{suc}$, as shall be made clear below.
The $\epsilon$ and $\mathsf{step}$ operations are then meant to represent sequences of configurations, whereas their interpretation over the abstract values
$\Conf\cup\{\init\}$ guarantees that consecutive configurations respect $\nxt$. Of course, the designated values of $\Mt_\mC$ are those corresponding to halting 
configurations. The $\mathtt{error}$ value is absorbing with respect to the interpretation of all operations, and gathers all meaningless situations. Overall, as we will 
show, $\Mt_{\mC}$ induces a logic that has at most one theorem, corresponding to the computation of $\mC$, if it is halting.

\subsection*{The inner workings of the construction}
In the next examples we will illustrate the way the Nmatrix $\Mt_\mC$ encodes the computations of $\mC$. Proofs of the general statements are postponed to the next section.
\smallskip

In order for $\sder_{\Mt_\mC}$ to have, at most, the formula representing the computation of $\mC$ as theorem, $\Val(\Mt_\mC)$ must contain enough valuations to refute
every formula not representing the computation of $\mC$. These valuations are presented in the following example
\begin{example}\label{ex:v}
By definition of $\seq$, it is clear that no formula containing variables corresponds to a sequence of configurations. Furthermore, no formula containing
variables can be a theorem of $\Mt_\mC$ since these formulas are easily refuted by any valuation sending the variables to the truth value $\error$, as 
this value is absorbing (aka infectious), that is, $\suc_{\Mt_{\mC}}(x) = \error$ whenever $x = \error$ and $(\step_q)_{\Mt_\mC}(x,\vec{z}) = \error$ 
whenever $x = \error$ or $z_i = \error$. Because of this, from here onwards, we concern ourselves only with the truth-values assigned to closed formulas.

We do not have much freedom left, but it will be enough.
The interpretations of the connectives are all deterministic, except in the case of $\zero_{\Mt_\mC}$ and $\suc_{\Mt_\mC}(\n_{\geq 0})$. This means that,
if $v \in \Val(\Mt_\mC)$ and $v(\zero) = \n_{=0}$, then there is no choice left for the values assigned by $v$ to
the remaining closed formulas. Consider, therefore, the following valuation
\begin{equation*}
    v^=_0(\psi) =
    \begin{cases}
        \n_{=0} &\text{if } \psi = \zero, \\
        \n_{\geq 1} &\text{if } \psi = \enc(1), \\
        \n_{\geq 2} &\text{if } \psi = \enc(j) \text{ with } j \geq 2, \\
        \mathtt{init} &\text{if } \psi = \epsilon, \\
        (\step_q)_{\Mt_\mC}(v^=_0(\varphi),\vec{z}) &\text{if } \psi = \step_q(\varphi,\psi_1,\dots,\psi_n) \text{ and } z_i = v^=_0(\psi_i), \\
        \mathtt{error}, &\text{otherwise}.
    \end{cases}
\end{equation*}
If $v(\zero) = \n_{\geq 0}$, however, we can still loop the truth values assigned to formulas of the form $\enc(j)$. The amount of loops could be infinite or finite, though
the infinite case is of no interest to us, since it does not allow us to falsify any of the formulas that we want to falsify.

Let $k \in\mathbb{N}_0$, consider the valuations
\begin{equation*}
    v_k(\psi) =
    \begin{cases}
        \n_{\geq 0} &\text{if } \psi = \enc(j) \text{ with } j \leq k, \\
        \n_{\geq 1} &\text{if } \psi = \enc(j) \text{ with } j = k + 1, \\
        \n_{\geq 2} &\text{if } \psi = \enc(j) \text{ with } j \geq k + 2, \\
        \mathtt{init} &\text{if } \psi = \epsilon, \\
        (\step_q)_{\Mt_\mC}(v_k(\varphi),\vec{z}) &\text{if } \psi = \step_q(\varphi,\psi_1,\dots,\psi_n) \text{ and } z_i = v_k(\psi_i), \\
        \mathtt{error}, &\text{otherwise}.
    \end{cases}
\end{equation*}\\[-15mm]

\hfill$\triangle$
\end{example}

{}\smallskip As previously discussed, it is crucial for the valuations in $\Val(\Mt_\mC)$ to be able to identify whenever two given numbers $a,b \in \mathbb{N}_0$
are not consecutive, or different. To this aim, for every pair $a,b \in (\mathbb{N}_0)^2$, we denote by $\mu^{+}_{a,b},\mu^{-}_{a,b},\mu^{\neq}_{a,b}$ the valuations
determined by the following conditions
\begin{equation*}
    \mu^{+}_{a,b} =
    \begin{cases}
        v_a &\text{if } b \geq a + 1, \\
        v_{a-1} &\text{if } b \leq a \text{ and } a \neq 0, \\
        v^=_0 &\text{if } b \leq a \text{ and } a = 0.
    \end{cases}
    \quad
    \mu^{-}_{a,b} =
    \begin{cases}
        v_{a-2} &\text{if } b \leq a - 2, \\
        v_{a-1} &\text{if } b \geq a - 1 \text{ and } a \neq 0, \\
        v^=_0 &\text{if } b \geq a - 1 \text{ and } a = 0.
    \end{cases}
    \quad
    \mu^{\neq}_{a,b} =
    \begin{cases}
        v^=_0 &\text{if } a = 0, \\
        v_{a-1} &\text{if } a \neq 0.
    \end{cases}
\end{equation*}

\begin{remark}\label{remark:MuProperties}
    The following properties can be easily checked by inspecting the corresponding definition:
    \begin{itemize}
    \item if $b \neq a + 1$ then $\mu^{+}_{a,b}(\enc(b)) \notin \suc_{\Mt_\mC}(\mu^{+}_{a,b}(\enc(a)))$,
    \item if $b \neq a - 1$ then $\mu^{-}_{a,b}(\enc(a)) \notin \suc_{\Mt_\mC}(\mu^{-}_{a,b}(\enc(b)))$, and
    \item if $b \neq a$ then $\mu^{\neq}_{a,b}(\enc(b)) \neq \mu^{\neq}_{a,b}(\enc(a))$.
    \end{itemize}
\end{remark}\smallskip




In the following two examples we consider two different machines that should make clear the soundness of our construction. In the first one, we show how
every valuation validates the formula encoding a finite computation. We also see how sequences of configurations can fail to respect $\nxt$ in different ways
and how we can use the valuations presented in Example~\ref{ex:v} to falsify formulas encoding them. In the second example, we show a counter machine that never
halts.

\begin{example}
    Consider the counter machine $\mC = \tuple{1,Q,\qi,\delta}$ with $Q = \{\qi,q_1,q_2,q_3\}$ and $\delta$ as defined in the following table
    \begin{center}
        \begin{tabular}{ c | c c c c}
            $q$ & $\qi$ & $q_1$ & $q_2$ & $q_3$ \\
            \hline
            $\delta(q)$ & $\tuple{\inc{1},q_1}$ & $\tuple{\test{1},q_3,q_2}$ & $\tuple{\test{1},q_3,q_3}$ & undefined
        \end{tabular}
    \end{center}
    
    The only halting state of $\mC$ is $q_3$ and the machine $\mC$ has the following finite computation
    \begin{equation*}
        \tuple{(\qi,0),(q_1,1),(q_2,0),(q_3,0)}.
    \end{equation*}
    For every $v \in \Val(\Mt_\mC)$, we have $v(\cep)=\init$. The values of $v(\enc(k))$ are dependent on $v$:
    if $v = v^=_0$ then $v(\enc(0)) = \n_{=0}$ and $v(\enc(1)) = \n_{\geq 1}$.
    If $v \neq v^=_0$ then $v(\enc(0)) = \n_{\geq 0}$ and $v(\enc(1)) \in (\suc)_{\Mt_\mC}(\n_{\geq 0}) = \{\n_{\geq 0},\n_{\geq 1}\}$.
    
    Let $\varphi_j$ be the formula representing the prefix with only the first $j+1$ configurations, we obtain, from the above equalities
    and the definition of $\step_\Mt$, that
    \begin{align*}
        v(\varphi_0) &= v(\step_{\qi}(\epsilon,\enc(0))) = (\step_{\qi})_\Mt(\init,v(\enc(0)))= \conf_{\qi,v(\enc(0))} \\
        v(\varphi_1) &= v(\step_{q_1}(\varphi_0,\enc(1))) = (\step_{q_1})_\Mt(\conf_{\qi,v(\enc(0))},v(\enc(1)))= \conf_{q_1,v(\enc(1))} \\
        v(\varphi_2) &= v(\step_{q_2}(\varphi_1,\enc(0))) = (\step_{q_2})_\Mt(\conf_{q_1,v(\enc(1))},v(\enc(0)))= \conf_{q_2,v(\enc(0))} \\
        v(\varphi_3) &= v(\step_{q_3}(\varphi_2,\enc(0))) = (\step_{q_3})_\Mt(\conf_{q_2,v(\enc(0))},v(\enc(0)))= \conf_{q_3,v(\enc(0))}
    \end{align*}

    The formula $\varphi_3$ encodes the finite computation of $\mC$ and, since $\conf_{q_3,v(\enc(0))} \in D$, $\emptyset \sder_{\Mt_\mC}
    \varphi_3$. Furthermore, the formulas $\varphi_i$ with $0\leq i\leq 3$, that encode its strict prefixes, are falsified by all valuations, since
    $q_3$ is the only halting state.
    
    Formulas not representing sequences of configurations, like $\suc(\psi)$ with
    $\psi\neq \enc(j)$, are falsified by every $v \in \Val(\Mt_\mC)$ since $v(\suc(\psi))=\suc_{\Mt}=\error$.
    Formulas encoding sequences of configurations not starting in the initial configuration of $\Mt$ are also falsifiable:
    $v_0(\step_q(\psi,\enc(j)))=\error$ whenever, either $q\neq \qi$ and $\psi=\cep$, or,
    $q=\qi$, $\psi=\cep$ and $j\neq 0$. For example, 
    $v_0(\step_{\qi}(\cep,\enc(1)))=(\step_{\qi})_{\Mt_{\mC}}(\init,\n_{\geq 1}) =\error$.
    
    The sequence $\tuple{(\qi,0),(q_1,2)}$, encoded by $\psi=\step_{q_1}(\varphi_0,\enc(2))$, illustrates a situation
    where the value in the counter was incremented by two while
    the transition $\delta(q_1)=\tuple{\inc{1},q_2}$ required it to increase by only one. In this case, we have
    $\mu^{+}_{0,2}(\psi)=v^=_0(\psi)=(\step_{q_1})_\Mt(\conf_{\qi,\n_{= 0}},\n_{\geq 2})=\error$.
    In the same way, we also have $\mu^{-}_{2,0}(\gamma)=\error$ with $\gamma=\step_{q_2}(\varphi_1,\enc(2))$.
    This reflects the fact that $\gamma$ encodes the sequence resulting from appending $(q_2,2)$ to the sequence encoded by $\varphi_1$, hence
    incrementing the value the counter while $\delta(q_1)$ required it to be decremented by one.

    Finally, consider $\xi=\step_{q_3}(\varphi_2,\enc(1))$ encoding the sequence resulting from appending $(q_3,1)$ to the sequence encoded by 
    $\varphi_2$. As the value in the first counter was incremented, while the transition $\delta(q_2)=\tuple{\test{1},q_3,q_3}$ required it to remain 
    unchanged we obtain $\mu^{\neq}_{0,1}(\xi) = v^=_0(\xi) = \error$.
    \hfill$\triangle$
\end{example}

\begin{example}
    Consider the counter machine $\mC = \tuple{2,Q,\qi,\delta}$ with $Q = \{\qi,q_1,q_2,q_3,q_4\}$ and $\delta$ as defined in following table
    \begin{center}
        \begin{tabular}{c | c c c c c}
            $q$ & $\qi$ & $q_1$ & $q_2$ & $q_3$ & $q_4$ \\
            \hline
            $\delta(q)$ & $\tuple{\inc{1},q_1}$ & $\tuple{\inc{2},q_2}$ & $\tuple{\test{1},q_4,q_3}$ & $\tuple{\inc{1},\qi}$ & undefined
        \end{tabular}
    \end{center}
    This machine does not have a finite computation, and its infinite computation loops indefinitely in the following cycle consisting of $4$ transitions.
    It starts by incrementing both counters.
    Then it tests if the first counter has the value $0$.
    As the counter has just been incremented, the test is bound to fail and hence that counter is decremented.
    It then increments the same counter, and returns to the initial state.
  
    The halting state $q_4$ could only be reached if at a certain point the test would succeed, but this never happens since, at the point the tests are 
    made, the value of the first counter is never $0$. Thus, the machine $\mC$ has the following infinite computation
    \begin{equation*}
        \tuple{(\qi,0,0),(q_1,1,0),(q_2,1,1),(q_3,0,1),(\qi,1,1),(q_1,2,1),(q_2,2,2),(q_3,1,2),(\qi,2,2),\ldots}
    \end{equation*}
    As we show in the next section, since $\mC$ has no finite computations, $\Mt_{\mC}$ has no theorems.
    Let $k\geq 1$ and consider the sequence resulting from adding $(q_3,k-1,k+1)$ to the prefix of the infinite computation of $\mC$ with 
    $4(k-1)+3$ elements, and let $\varphi_k$ encode this sequence. In this case, the value of the second counter is increased, when it should have 
    remained the same, and we have
    \begin{equation*}
        \mu^{\neq}_{k,k+1}(\varphi_k) = v_{k-1}(\varphi_k) = (\step_{q_3})_{\Mt_\mC}(\conf_{q_2,\n_{\geq 1},\n_{\geq 1}},\n_{\geq 0},\n_{\geq 2})
        = \error.
    \end{equation*}\vspace{-3\baselineskip}

\hfill $\triangle$
    
\end{example}

\section{Monadicity of Nmatrices is undecidable}\label{sec:undecided}
In this section we show that $\Mt_\mC$ really does what is intended. The main result of the paper then follows, after we additionally introduce a 
construction connecting the existence of a theorem with monadicity.

\subsection*{$\Mt_\mC$ validates the finite computation of $\mC$}
In the following propositions we show that $\Mt_\mC$ is interpreting computations of $\mC$ as it should.
Thus, formulas encoding computations of $\mC$ that do not end in a halting state can be falsified, whilst
the one encoding the finite computation of $\mC$ is always designated.

\begin{proposition}\label{prop:StepValidatesNxt}
    Let $\mC = \tuple{n,Q,\qi,\delta}$ be a deterministic $n$-counter machine. If $\nxt(q,\vec{y}) = (q',\vec{z})$ then
    \begin{equation}\label{equation:ValidStep}
        (\step_{q'})_{\Mt_\mC}(\conf_{q,v(\enc(\vec{y}))}, v(\enc(\vec{z}))) = \{\conf_{q',v(\enc(\vec{z}))}\}, \text{ for every } v \in \Val(\Mt_\mC).
    \end{equation}
\end{proposition}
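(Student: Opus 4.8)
The plan is to unfold the interpretation $(\step_{q'})_{\Mt_\mC}$ on the pair $(\conf_{q,v(\enc(\vec y))},\,v(\enc(\vec z)))$ and to verify, by a case analysis on the shape of $\delta(q)$, that precisely the clause yielding $\{\conf_{q',v(\enc(\vec z))}\}$ fires. Since $\nxt(q,\vec y)$ is assumed to be defined, $(q,\vec y)$ is not a halting configuration, so $\delta(q)$ is defined and is of one of the three transition types; these correspond exactly to the three clauses of $\nxt$ and to the three matching sub-conditions inside the $\conf$-case of $(\step_{q'})_{\Mt_\mC}$. Before splitting into cases I would record the single property of valuations that powers the whole argument: because $\enc(k+1)=\suc(\enc(k))$ and $v$ respects the interpretation of $\suc$, we always have $v(\enc(k+1))\in\suc_{\Mt_\mC}(v(\enc(k)))$; and because $\enc(0)=\zero$, we always have $v(\enc(0))\in\zero_{\Mt_\mC}=\{\n_{=0},\n_{\geq 0}\}$.

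Writing $\vec a=v(\enc(\vec y))$ and $\vec b=v(\enc(\vec z))$, I would then dispatch the three cases. If $\delta(q)=\tuple{\inc i,q'}$ then $\vec z=\vec y+\vec{\mathsf{e}_i}$, so $b_j=a_j$ for $j\neq i$ while $b_i=v(\enc(y_i+1))\in\suc_{\Mt_\mC}(a_i)$, which is exactly the increment sub-condition. If $\delta(q)=\tuple{\test i,q',s}$ with $y_i=0$ then $\vec z=\vec y$, hence $\vec a=\vec b$, and $a_i=v(\enc(0))\in\{\n_{=0},\n_{\geq 0}\}$, matching the test-zero sub-condition. If $\delta(q)=\tuple{\test i,s,q'}$ with $y_i\neq 0$ then $\vec z=\vec y-\vec{\mathsf{e}_i}$, so $b_j=a_j$ for $j\neq i$ and, using $y_i=z_i+1$, we get $a_i=v(\enc(z_i+1))\in\suc_{\Mt_\mC}(b_i)$, matching the test-nonzero sub-condition.

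In each case the matched clause returns the singleton whose state is the subscript of the $\step$ connective and whose counter vector is its second argument, that is $\{\conf_{q',\vec b}\}=\{\conf_{q',v(\enc(\vec z))}\}$, which is exactly (\ref{equation:ValidStep}). The output is genuinely a singleton because $(\step_{q'})_{\Mt_\mC}$ is presented as an if-then-else and is therefore single-valued; and although the resulting configuration depends on $v$ through $\vec b$, the claimed equality carries $v$ identically on both sides, so it holds for every $v\in\Val(\Mt_\mC)$.

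The argument is bookkeeping, so I do not expect a genuine obstacle; the only point needing care is the directionality of the two successor conditions. The increment clause requires $b_i\in\suc_{\Mt_\mC}(a_i)$ (the new abstract value is a successor of the old one), whereas the decrementing test clause requires $a_i\in\suc_{\Mt_\mC}(b_i)$ (the old value is a successor of the new one); both are instances of the same valuation property, once one reads off from $\nxt$ which of $y_i,z_i$ is the successor of the other. It is precisely the alignment between the valuation constraint $v(\suc(\psi))\in\suc_{\Mt_\mC}(v(\psi))$ and the deliberately loose interpretation of $\suc$ that forces the identity to hold uniformly across all valuations, which is what the proposition asserts.
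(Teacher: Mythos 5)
Your proof is correct and follows essentially the same route as the paper's: a three-way case split on $\delta(q)$, using the facts that $v(\enc(k+1))\in\suc_{\Mt_\mC}(v(\enc(k)))$ and $v(\enc(0))\in\{\n_{=0},\n_{\geq 0}\}$ to verify that exactly the matching sub-condition of $(\step_{q'})_{\Mt_\mC}$ fires. Your explicit remark on the opposite directionality of the successor condition in the increment versus the decrementing-test clause is the one point of care, and you handle it correctly.
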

\begin{proof}
    Suppose that $\nxt(q,\vec{y}) = (q',\vec{z})$. We have to consider three cases, depending on 
     $\delta(q)$. 
    
    If $\delta(q) = (\inc{i},q')$ and $\vec{z} = \vec{y} + \vec{\mathsf{e}_i}$ then, for every $j \neq i$ and $v \in \Val(\Mt_\mC)$, we have
    $z_j = y_j$ and $v(\enc(z_j)) = v(\enc(y_j))$. Furthermore, $z_i = y_i + 1$, so $\enc(z_i) = \suc(\enc(y_i))$ and, for every $v \in \Val(\Mt_\mC)$,
    $v(\enc(z_i)) = v(\suc(\enc(y_i))) \in \suc_{\Mt_\mC}(v(\enc(y_i)))$. 

 Otherwise, $\delta(q) = \tuple{\test{i}, s_1, s_2}$ for some machine states $s_1$ and $s_2$.

    If $s_1=q'$,
     $y_i = 0$ and $\vec{z} = \vec{y}$. Then, for every
    $v \in \Val(\Mt_\mC)$, we have $v(\enc(\vec{z})) = v(\enc(\vec{y}))$ and $v(\enc(z_i)) = v(\enc(y_i)) = v(\zero) \in \{\n_{=0}, \n_{\geq 0}\}$.

    If $s_2=q'$, 
    $y_i \neq 0$ and $\vec{y} = \vec{z} + \vec{\mathsf{e}_i}$.
    Then, for every $j \neq i$ and $v \in \Val(\Mt_\mC)$, we have $z_j = y_j$ and so $v(\enc(z_j)) = v(\enc(y_j))$. Furthermore, $y_i = z_i + 1$,
    so $\enc(y_i) = \suc(\enc(z_i))$ and, for every $v \in \Val(\Mt_\mC)$, $v(\enc(y_i)) = v(\suc(\enc(z_i))) \in \suc_{\Mt_\mC}(v(\enc(z_i)))$. 
  
  In all the three cases we conclude
    \eqref{equation:ValidStep} directly by the definition of $(\step_{q'})_{\Mt_\mC}$.
\end{proof}

\begin{theorem}
    \label{theorem:IfFiniteComputationThenTheorem}
    Let $\mC = \tuple{n,Q,\qi,\delta}$ be a deterministic $n$-counter machine with a finite computation $\tuple{C_0,\dots,C_k}$, then
    $\emptyset \sder_{\Mt_\mC} \seq(\tuple{C_0,\dots,C_k})$.
\end{theorem}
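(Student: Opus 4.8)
The plan is to prove the statement by a straightforward induction on the length of the computation, showing that \emph{every} valuation assigns to each prefix formula exactly the configuration value one expects, and then invoking the designated set at the final, halting step. Fix an arbitrary $v \in \Val(\Mt_\mC)$, write $C_i = (q_i, \vec{x}_i)$ for $0 \leq i \leq k$, and abbreviate the prefix formulas as $\varphi_i = \seq(\tuple{C_0,\ldots,C_i})$. The claim I would establish by induction on $i$ is that $v(\varphi_i) = \conf_{q_i, v(\enc(\vec{x}_i))}$. It is important to note that this target value depends on $v$, through the abstract $\Rm$-value that $v$ assigns to each encoded counter; the inductive statement must therefore carry that dependence explicitly, and proving it uniformly in $v$ is exactly what yields a theorem.

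For the base case, $C_0 = (\qi, \vec{0})$, so $\varphi_0 = \step_{\qi}(\cep, \enc(\vec{0}))$. Here $v(\cep) = \init$ because $\cep_{\Mt_\mC}$ is a singleton, and every component of $v(\enc(\vec{0}))$ equals $v(\zero) \in \{\n_{= 0}, \n_{\geq 0}\}$, whence $v(\enc(\vec{0})) \in \{\n_{= 0}\}^n \cup \{\n_{\geq 0}\}^n$. The first clause in the definition of $(\step_{\qi})_{\Mt_\mC}$ then applies and gives $v(\varphi_0) = \conf_{\qi, v(\enc(\vec{0}))}$, as required.

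For the inductive step I would use $\varphi_{i+1} = \step_{q_{i+1}}(\varphi_i, \enc(\vec{x}_{i+1}))$, so that $v(\varphi_{i+1}) \in (\step_{q_{i+1}})_{\Mt_\mC}(v(\varphi_i), v(\enc(\vec{x}_{i+1})))$. Substituting the inductive hypothesis $v(\varphi_i) = \conf_{q_i, v(\enc(\vec{x}_i))}$ and using that $\nxt(q_i, \vec{x}_i) = (q_{i+1}, \vec{x}_{i+1})$, since $\tuple{C_0,\ldots,C_k}$ is a computation, Proposition~\ref{prop:StepValidatesNxt} gives at once that this set is the singleton $\{\conf_{q_{i+1}, v(\enc(\vec{x}_{i+1}))}\}$, so $v(\varphi_{i+1}) = \conf_{q_{i+1}, v(\enc(\vec{x}_{i+1}))}$. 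Finally, since $C_k$ is halting we have $q_k \in H$, and hence $v(\varphi_k) = \conf_{q_k, v(\enc(\vec{x}_k))} \in D$ by the definition of $D$. As $v$ was arbitrary, $\emptyset \sder_{\Mt_\mC} \seq(\tuple{C_0,\ldots,C_k})$.

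The main obstacle here is conceptual rather than calculational: because $\Mt_\mC$ is non-deterministic, the value $v(\enc(\vec{x}_i))$ is genuinely not fixed across valuations, so one cannot hope to pin down a single configuration value independent of $v$. The reason the induction survives is that Proposition~\ref{prop:StepValidatesNxt} guarantees the step operation returns a singleton for \emph{every} admissible choice of $v(\enc(\vec{x}_i))$ and $v(\enc(\vec{x}_{i+1}))$; that uniformity is precisely what lets us conclude a value that is designated regardless of $v$. All the real work thus sits in that proposition, and the present argument is essentially its iteration along the computation.
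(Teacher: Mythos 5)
Your proof is correct and follows essentially the same route as the paper: both arguments iterate Proposition~\ref{prop:StepValidatesNxt} along the computation, handle the initial configuration via the first clause of $(\step_{\qi})_{\Mt_\mC}$ (noting that all components of $v(\enc(\vec{0}))$ coincide since they are values of the same formula $\zero$), and conclude from $q_k \in H$ that the final value is designated for every valuation. The only difference is that you make the induction explicit where the paper leaves it implicit.
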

\begin{proof}
    Suppose that $\tuple{C_0,\dots,C_k}$ is the finite computation of $\mC$. Then we have that 
    $C_0 = (\qi, \vec{\zero})$, where $\vec{\zero} =
    \tuple{\zero,\dots,\zero}$, 
     $\nxt(C_j) = C_{j+1}$ for every $0 \leq j < k$ and 
     $C_k = (q_k, \vec{z})$ is a halting configuration.
    For every $v \in \Val(\Mt_\mC)$ we have $v(\seq(\tuple{C_0})) = \conf_{\qi,v(\vec{\zero})}$ and, by proposition~\ref{prop:StepValidatesNxt},
    $v(\tuple{C_0,\dots,C_k}) = \conf_{q_k,v(\enc(\vec{z}))}$, where $C_k = (q_k, \vec{z})$. Since $C_k$ is a halting configuration, we conclude that
    $\conf_{q_k,v(\enc(\vec{z}))} \in D$, and so $\emptyset \sder_{\Mt_\mC} \seq(\tuple{C_0,\dots,C_k})$.
\end{proof}

\subsection*{$\Mt_\mC$ can falsify everything else}
The following propositions deal with all the possible ways in which a formula can fail to represent a halting 
computation of $\mC$.

\begin{proposition}\label{prop:RefutesNonSequences}
    Let $\mC = \tuple{n,Q,\qi,\delta}$ be a deterministic $n$-counter machine.
    If $\varphi \in \fm{\Sigma_\mC}(\emptyset)$ does not represent a sequence of configurations of $\mC$ then $v(\varphi) 
    \neq \conf_{q,\vec{y}}$ for all $v\in \Val_{\mC}$, $q \in Q$ and $\vec{y} \in \Rm^n$.
\end{proposition}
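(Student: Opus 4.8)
The plan is to prove the contrapositive: for every $v \in \Val(\Mt_\mC)$, if $v(\varphi) \in \Conf$ then $\varphi$ is a sequence formula, i.e.\ $\varphi = \seq(\tuple{C_0,\dots,C_k})$ for some configurations $C_0,\dots,C_k$ (not necessarily respecting $\nxt$). Everything rests on a simple typing discipline read directly off the interpretation tables: $\cep_{\Mt_\mC} = \{\init\}$, the set $\cz_{\Mt_\mC}$ and every output of $\suc_{\Mt_\mC}$ are contained in $\Rm \cup \{\error\}$, and every output of each $(\step_q)_{\Mt_\mC}$ is contained in $\Conf \cup \{\error\}$. Since any valuation satisfies $v(\conn(\psi_1,\dots,\psi_m)) \in \conn_{\Mt_\mC}(v(\psi_1),\dots,v(\psi_m))$, the value a closed formula takes pins down the kind of its head connective: a value in $\Conf$ can only arise when the head is some $\step_q$, a value in $\Rm$ only from $\cz$ or $\suc$, and the value $\init$ only from $\cep$.

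First I would record two auxiliary facts about closed formulas. (i) If $v(\chi) \in \Rm$ then $\chi = \enc(m)$ for some $m \in \mathbb{N}_0$; this is a short structural induction, as the head of $\chi$ must be $\cz$ (so $\chi = \enc(0)$) or $\suc$, and in the latter case $\chi = \suc(\chi')$ forces $v(\chi') \in \Rm$ (otherwise $\suc_{\Mt_\mC}$ returns $\error$), whence the induction hypothesis gives $\chi' = \enc(m')$ and $\chi = \enc(m'+1)$. (ii) If $v(\psi) = \init$ then $\psi = \cep$; this is immediate, since $\cep$ is the only connective whose outputs contain $\init$ and it is nullary.

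The core of the argument is then a structural induction on the closed formula $\varphi$ showing that $v(\varphi) \in \Conf$ implies $\varphi$ is a sequence formula. By the typing discipline, the cases $\varphi \in \{\cz, \cep\}$ and $\varphi = \suc(\varphi')$ are vacuous, as such formulas never take values in $\Conf$. The single substantive case is $\varphi = \step_q(\psi,\chi_1,\dots,\chi_n)$. Since $v(\varphi) \in \Conf$ and the ``otherwise'' clause of $(\step_q)_{\Mt_\mC}$ returns $\error \notin \Conf$, the tuple $(v(\psi),v(\chi_1),\dots,v(\chi_n))$ must meet one of the listed clauses; inspecting each of them shows they all force $v(\psi) \in \{\init\} \cup \Conf$ and $(v(\chi_1),\dots,v(\chi_n)) \in \Rm^n$. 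By fact (i) each $\chi_i$ equals $\enc(y_i)$, so $(\chi_1,\dots,\chi_n) = \enc(\vec{y})$. If $v(\psi) = \init$, fact (ii) gives $\psi = \cep = \seq(\tuple{})$ and so $\varphi = \seq(\tuple{(q,\vec{y})})$; if instead $v(\psi) \in \Conf$, the induction hypothesis makes $\psi = \seq(\tuple{C_0,\dots,C_m})$, whence $\varphi = \seq(\tuple{C_0,\dots,C_m,(q,\vec{y})})$. Either way $\varphi$ is a sequence formula, completing the induction and hence the proposition.

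I expect the only delicate point to be the bookkeeping in the $\step_q$ case, where one must verify that each of the good clauses genuinely constrains the first argument to $\{\init\} \cup \Conf$ and all the remaining arguments to $\Rm$, so that fact (i) applies uniformly. The infectiousness of $\error$ ensures that no alternative source of $\Conf$-values can intervene, and since $\varphi$ is closed all of its subformulas are closed, so there is no interference from variables.
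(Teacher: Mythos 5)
Your proposal is correct and is essentially the paper's own argument in contrapositive form: both are structural inductions on the closed formula exploiting the fact that $\Conf$-values can only be produced by the $\step_q$ connectives, whose non-$\error$ clauses force the first argument into $\{\init\}\cup\Conf$ and the rest into $\Rm$. Your explicit auxiliary facts (values in $\Rm$ only come from $\enc(m)$, and $\init$ only from $\cep$) are used implicitly by the paper, so the two proofs differ only in presentation.
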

\begin{proof}
    The proof follows by induction on the structure of the formula $\varphi\in \fm{\Sigma_\mC}(\emptyset)$.

    In the base case we have $\varphi \in \{\cz,\cep\}$. The statement then holds since, for all $v\in \Val_{\mC}$, $v(\varphi) \in \{\zero,\init\}$.

    For the step we have two cases.
    In the first case, $\varphi=\suc(\psi)$ and $v(\suc(\psi))\in \Rm \cup \{\error\}$. In the second case, $\varphi=\mathtt{step}_q(\psi,\psi_1,\ldots,
    \psi_n)$ and, if $\varphi$ does not represent any sequence of configurations, then one of the following must hold
    \begin{itemize}
        \item $\psi$ does not represent a sequence of configurations, in which case, by induction hypothesis, $v(\psi)\neq\conf_{q,\vec{y}}$, or
        \item $\psi_i \neq \enc(j)$, for some $1 \leq i \leq n$, so $v(\psi_i)\notin\Num$.
    \end{itemize}
    In both cases we have  $v(\varphi)=\suc_{\Mt_\mC}(v(\psi),v(\psi_1),\ldots,v(\psi_n))=\error$.
\end{proof}

\begin{proposition}
    \label{prop:NotRespectingNxtEntailsError}
    Given a deterministic counter machine $\mC = \tuple{n,Q,\qi,\delta}$. If $\nxt(q,\vec{y}) \neq (q',\vec{z})$ then
    \begin{equation}\label{equation:NotValidStep}
        (\step_{q'})_{\Mt_\mC}(\conf_{q,v(\enc(\vec{y}))}, v(\enc(\vec{z}))) = \{\error\},
    \end{equation}
    for some $v \in \Val(\Mt_\mC)$.
\end{proposition}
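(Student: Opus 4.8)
The plan is to prove the contrapositive-style statement by a careful case analysis on exactly how the transition relation $\nxt$ fails, and for each failure mode to exhibit a concrete valuation—drawn from the family $\mu^{+}_{a,b}, \mu^{-}_{a,b}, \mu^{\neq}_{a,b}$ introduced before Remark~\ref{remark:MuProperties}—that forces the output of $(\step_{q'})_{\Mt_\mC}$ to be $\{\error\}$. The engine of the argument is Remark~\ref{remark:MuProperties}: each of the three $\mu$-valuations is precisely designed so that a particular arithmetic discrepancy between the counter vectors $\vec{y}$ and $\vec{z}$ becomes detectable at a single coordinate, which is exactly what is needed to push $(\step_{q'})_{\Mt_\mC}$ out of its first (well-behaved) branch and into the $\{\error\}$ branch.

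First I would enumerate the ways in which $\nxt(q,\vec{y}) = (q',\vec{z})$ can fail, organized by $\delta(q)$. If $\delta(q)$ is undefined (so $q$ is halting), or if the target state $q'$ recorded in $\step_{q'}$ does not match the state dictated by $\delta(q)$, then every branch of the first case in the definition of $(\step_{q'})_{\Mt_\mC}$ is unavailable for state reasons alone, and one gets $\{\error\}$ for every valuation. The substantive cases are those where the state is correct but the counter arithmetic is wrong. For an increment transition $\delta(q) = \tuple{\inc{i},q'}$ with $\vec{z} \neq \vec{y} + \vec{\mathsf{e}_i}$, there must be some coordinate $j$ where the required relationship fails: either $j \neq i$ with $z_j \neq y_j$—handled by $\mu^{\neq}_{y_j,z_j}$, which by the third bullet of Remark~\ref{remark:MuProperties} makes $\enc(y_j)$ and $\enc(z_j)$ receive distinct truth-values, violating the ``$z_j = y_j$'' clause—or $j = i$ with $z_i \neq y_i + 1$, handled by $\mu^{+}_{y_i,z_i}$, whose first bullet guarantees $\enc(z_i)$ does not land in $\suc_{\Mt_\mC}$ of $\enc(y_i)$. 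The test cases split according to whether $\delta(q) = \tuple{\test{i},q',s}$ is supposed to fire the zero-branch or $\delta(q) = \tuple{\test{i},s,q'}$ the nonzero-branch, and for the latter the decrement discrepancy $y_i \neq z_i + 1$ is caught by $\mu^{-}_{y_i,z_i}$ via the second bullet of the remark; a failed zero-test (where $y_i \neq 0$ but the zero-branch is required) is caught by observing that the chosen valuation assigns $\enc(y_i)$ a value outside $\{\n_{=0},\n_{\geq 0}\}$.

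The bookkeeping I must be careful about is that $\mu^{+}_{a,b}$, $\mu^{-}_{a,b}$, $\mu^{\neq}_{a,b}$ are genuine valuations (they are defined as instances of $v^=_0$ or $v_k$, which Example~\ref{ex:v} already establishes belong to $\Val(\Mt_\mC)$), and that the premise $\conf_{q,v(\enc(\vec{y}))}$ in~\eqref{equation:NotValidStep} is evaluated under the same $v$ that witnesses the discrepancy. I need to confirm that feeding $v(\enc(\vec{y}))$ and $v(\enc(\vec{z}))$ into $(\step_{q'})_{\Mt_\mC}$, with $v$ the relevant $\mu$-valuation, indeed falls outside all of the first-case clauses—so that only the $\{\error\}$ branch remains. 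This is where Remark~\ref{remark:MuProperties} does all the work: each bullet is exactly the negation of one of the conjuncts guarding the good branches of $\step_{q'}$.

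The main obstacle I anticipate is organizational rather than mathematical: ensuring the case split is genuinely exhaustive. In particular, when $\nxt(q,\vec{y}) = (q',\vec{z})$ fails, I must verify that at least one coordinate witnesses the failure in a way matching one of the three $\mu$-constructions, and that I have not overlooked a mismatch mode—for instance, simultaneously wrong state and wrong counters, or a test transition whose correct branch is the opposite of the one $q'$ sits in. I would handle this by first peeling off the pure state-mismatch and halting-state cases (which require no $\mu$-valuation at all), and only then, assuming the state is consistent with $\delta(q)$, reducing to a single offending coordinate and invoking the appropriate clause of Remark~\ref{remark:MuProperties}. Once the discrepancy is localized to one coordinate and the matching $\mu$-valuation is selected, the conclusion $(\step_{q'})_{\Mt_\mC}(\conf_{q,v(\enc(\vec{y}))}, v(\enc(\vec{z}))) = \{\error\}$ is immediate from the definition of $\step_{q'}$.
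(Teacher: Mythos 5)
Your overall strategy coincides with the paper's: a case analysis on $\delta(q)$, localisation of the failure to a single coordinate, and deployment of the valuations $\mu^{+}_{a,b}$, $\mu^{-}_{a,b}$, $\mu^{\neq}_{a,b}$ together with Remark~\ref{remark:MuProperties} to knock out the relevant clause of $(\step_{q'})_{\Mt_\mC}$. There is, however, one concrete flaw in your case split. You claim that whenever the target state $q'$ ``does not match the state dictated by $\delta(q)$'' the output is $\{\error\}$ \emph{for every} valuation, with no $\mu$-valuation needed. This is false when $\delta(q)=\tuple{\test{i},s_1,s_2}$ and $q'$ coincides with the \emph{wrong} one of $s_1,s_2$: the clause of $(\step_{q'})_{\Mt_\mC}$ guarded by that occurrence of $q'$ remains available, and whether it fires depends on the truth-values assigned to $\enc(y_i)$ and $\enc(z_i)$. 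For instance, if $y_i=z_i=0$, $\vec{y}=\vec{z}$ and $q'=s_2\neq s_1$ (so the machine should have moved to $s_1$), then under any $v_k$ one has $v_k(\enc(y_i))=\n_{\geq 0}\in\suc_{\Mt_\mC}(\n_{\geq 0})=\suc_{\Mt_\mC}(v_k(\enc(z_i)))$, the nonzero-branch clause is satisfied, and the output is $\{\conf_{q',v_k(\enc(\vec{z}))}\}$ rather than $\{\error\}$; one must take $v^=_0$, which assigns $\enc(y_i)$ the value $\n_{=0}$, a value not in the image of $\suc_{\Mt_\mC}$. The dual situation ($y_i\neq 0$ but $q'=s_1$, e.g.\ with $z_i=y_i$) likewise admits valuations $v_k$ with $k\geq y_i$ under which the zero-branch clause fires.

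The paper sidesteps this by organising the test case not by which position $q'$ occupies in $\delta(q)$ but by whether $y_i=0$ or $y_i\neq 0$, fixing the single witnessing valuation accordingly ($v^=_0$, respectively $\mu^{-}_{y_i,z_i}$) and then explicitly verifying that \emph{both} $\test{i}$-clauses fail under that one valuation --- a check that is also indispensable when $s_1=s_2$, a sub-case your split by branch does not cleanly accommodate. You do acknowledge in passing that the chosen valuation must fall outside all clauses simultaneously, so the repair is within reach; but as written, your ``pure state-mismatch'' shortcut lets the wrong-branch test cases through without a witnessing valuation. The remainder of your argument (increment discrepancies via $\mu^{+}$, off-coordinate discrepancies via $\mu^{\neq}$, the halting/undefined case) is sound and agrees with the paper's proof.
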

\begin{proof}

    Assume $\nxt(q,\vec{y}) \neq (q',\vec{z})$ and notice that, if $\delta(q)$ concerns the $i$th counter and $y_j \neq z_j$, for some $j \neq i$, 
    then $\mu^{\neq}_{y_j,z_j}(\enc(y_j)) \neq \mu^{\neq}_{y_j,z_j}(\enc(z_j))$, by remark~\ref{remark:MuProperties}. Therefore,
    equality~\eqref{equation:NotValidStep} holds for $v = \mu^{\neq}_{y_j,z_j}$. Because of this, throughout the rest of the proof, we assume
    that, $y_j = z_j$, for every $j \neq i$, and concern ourselves only with the values taken by $y_i$ and $z_i$.
%
    We have to consider three cases, depending on 
     $\delta(q)$.

    If $\delta(q) = \tuple{\inc{i}, s}$, for some machine state $s$. Then either
    $q' \neq s$, in which case equality~\eqref{equation:NotValidStep} holds for every $v \in \Val(\Mt_\mC)$, or $z_i \neq y_i + 1$.
    In this later case, also by remark~\ref{remark:MuProperties}, we have $\mu^{+}_{y_i,z_i}(\enc(z_i)) \notin \suc_{\Mt_\mC}(\mu^{+}_{y_i,z_i}
    (\enc(y_i)))$, so equality~\eqref{equation:NotValidStep} holds for $v = \mu^{+}_{y_i,z_i}$.

   Otherwise, $\delta(q) = \tuple{\test{i}, s_1, s_2}$ for some machine states $s_1$ and $s_2$. 
    
    If $y_i = 0$, then either $q' \neq s_1$ or $z_i \neq y_i$. Consider the valuation $\mu^{=}_{y_i,z_i} = v^=_0$. Since $v^=_0(y_i) = \n_{=0}
    \notin \suc_{\Mt_\mC}(v^=_0(\enc(z_i)))$, the second condition concerning $\test{i}$, in the definition of $(\step_{q'})_{\Mt_\mC}$, is not
    satisfied whenever $v = \mu^{=}_{y_i,z_i}$. The first condition is also not satisfied if $q' \neq s_1$, directly, or if $z_i \neq y_i$, since in this case
    $\mu^{=}_{y_i,z_i}(z_i) \neq \mu^{=}_{y_i,z_i}(y_i)$, by remark~\ref{remark:MuProperties}. We conclude that equality~\eqref{equation:NotValidStep} holds for $v = \mu^{=}_{y_i,z_i}$.
    
    If $y_i \neq 0$, then either $q' \neq s_2$ or $z_i \neq y_i - 1$. Note that, in any case, $\mu^{-}_{y_i,z_i}(\enc(y_i)) \notin
    \{\n_{=0},\n_{\geq 0}\}$, which can easily be checked using the definition of $\mu^{-}_{y_i,z_i}$. Therefore, the first condition
    concerning $\test{i}$, in the definition of $(\step_{q'})_{\Mt_\mC}$, is not satisfied whenever $v = \mu^{-}_{y_i,z_i}$.
    The second condition is also not satisfied if
    $q' \neq s_2$, directly, or if $z_i \neq y_i - 1$, since in this case $\mu^{-}_{y_i,z_i}(\enc(y_i)) \notin
    \suc_{\Mt_\mC}(\mu^{-}_{y_i,z_i}(\enc(z_i)))$, by remark~\ref{remark:MuProperties}. We conclude that equality~\eqref{equation:NotValidStep} 
    holds for $v = \mu^{-}_{y_i,z_i}$.
\end{proof}

\begin{proposition}\label{prop:RefutesNonComputations}
    Given a deterministic counter machine $\mC = \tuple{n,Q,\qi,\delta}$ and $\varphi \in \fm{\Sigma_\mC}(\emptyset)$ such that
    $\varphi = \seq(\tuple{C_0,\dots,C_k})$. If $\tuple{C_0,\dots,C_k}$ is not the computation of $\mC$ then $\emptyset \not\sder_{\Mt_\mC} \varphi$. 
\end{proposition}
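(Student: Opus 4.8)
The plan is to exhibit, for each way in which $\tuple{C_0,\dots,C_k}$ can fail to be the computation of $\mC$, a valuation $v \in \Val(\Mt_\mC)$ with $v(\varphi) \notin D$. First I would record that, by the definition of the computation of $\mC$, the sequence $\tuple{C_0,\dots,C_k}$ fails to be that computation precisely when at least one of the following holds: (I) $C_0$ is not the initial configuration $(\qi,\vec{0})$; (II) $C_k = (q_k,\vec{x}_k)$ is not a halting configuration, i.e.\ $q_k \notin H$; or (III) $\nxt(C_i) \neq C_{i+1}$ for some $0 \leq i < k$ (indeed, if all three fail then $C_0$ is initial, every step respects $\nxt$, no intermediate configuration halts --- otherwise $\nxt$ would be undefined there and (III) would hold --- and $C_k$ halts, which is exactly the computation). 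The single structural fact driving everything is that, writing $C_j = (q_j,\vec{x}_j)$, the codomain of $(\step_{q_j})_{\Mt_\mC}$ forces $v(\seq(\tuple{C_0,\dots,C_j}))$ to be either $\error$ or $\conf_{q_j,v(\enc(\vec{x}_j))}$ for every $v$ and every $j$, together with the absorbing (infectious) character of $\error$ noted in Example~\ref{ex:v}, which guarantees that once any prefix evaluates to $\error$, so does $\varphi$.

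I would treat the cases in the order (I), (III), (II). For (I), I would take $v = v^=_0$ and evaluate $\seq(\tuple{C_0}) = \step_{q_0}(\epsilon,\enc(\vec{x}_0))$: since $v^=_0(\epsilon)=\init$, the only non-$\error$ output of $(\step_{q_0})_{\Mt_\mC}$ requires $q_0 = \qi$ and $v^=_0(\enc(\vec{x}_0)) \in \{\n_{=0}\}^n \cup \{\n_{\geq 0}\}^n$; as $v^=_0$ never takes the value $\n_{\geq 0}$ and sends a nonzero counter to $\n_{\geq 1}$ or $\n_{\geq 2}$, failure of (I) in either coordinate makes this output $\error$, and absorption gives $v^=_0(\varphi)=\error$. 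For (III), fix an $i$ with $\nxt(C_i)\neq C_{i+1}$ and take the valuation $v$ supplied by Proposition~\ref{prop:NotRespectingNxtEntailsError}, so that $(\step_{q_{i+1}})_{\Mt_\mC}(\conf_{q_i,v(\enc(\vec{x}_i))},v(\enc(\vec{x}_{i+1}))) = \error$. By the structural fact, $v(\seq(\tuple{C_0,\dots,C_i}))$ is either $\error$ (and then the prefix of length $i+2$ evaluates to $\error$ by absorption) or exactly $\conf_{q_i,v(\enc(\vec{x}_i))}$ (and then the chosen step produces $\error$); in both cases $v(\seq(\tuple{C_0,\dots,C_{i+1}}))=\error$, hence $v(\varphi)=\error \notin D$.

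Finally, for (II) I may assume (I) and (III) both fail, so $C_0$ is initial and every step respects $\nxt$. A straightforward induction --- base case $v(\seq(\tuple{C_0})) = \conf_{\qi,v(\enc(\vec{0}))}$ using $v(\epsilon)=\init$ and $v(\zero)\in\{\n_{=0},\n_{\geq 0}\}$, inductive step via Proposition~\ref{prop:StepValidatesNxt}, exactly as in the proof of Theorem~\ref{theorem:IfFiniteComputationThenTheorem} --- shows that \emph{every} $v \in \Val(\Mt_\mC)$ satisfies $v(\varphi)=\conf_{q_k,v(\enc(\vec{x}_k))}$; since $q_k \notin H$ this value lies outside $D$, so $\varphi$ is falsified by all valuations. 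I expect the main obstacle to be the bookkeeping in case (III): one must be careful that the valuation from Proposition~\ref{prop:NotRespectingNxtEntailsError} forces $\error$ at step $i+1$ regardless of whatever the earlier prefix evaluates to, which is precisely what the structural observation plus absorption secure, so that no separate analysis of the well-formedness of $C_0,\dots,C_i$ is needed.
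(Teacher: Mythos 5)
Your proposal is correct and follows essentially the same route as the paper: the same three-way case split (wrong initial configuration, non-halting final configuration, a step violating $\nxt$), with cases (I) and (III) discharged by exhibiting a falsifying valuation via Proposition~\ref{prop:NotRespectingNxtEntailsError} and the infectiousness of $\error$, and case (II) by noting the value must be a non-designated $\conf_{q_k,\cdot}$ or $\error$. The only (harmless) deviations are your choice of $v^=_0$ instead of the paper's $v_{y_j-1}$ in case (I), and your more explicit bookkeeping of the ``$\error$ or $\conf_{q_j,v(\enc(\vec{x}_j))}$'' structural fact, which the paper leaves implicit.
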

\begin{proof}
    If $\tuple{C_0,\dots,C_k}$ is not the computation of $\mC$, then one of the following must hold: (i) $C_0$ is not the initial configuration of $\mC$,
    (ii) $C_k$ is not a halting configuration of $\mC$, or (iii) there is some $1 \leq i < k$ such that $\nxt(C_i) \neq C_{i+1}$. We deal with each
    situation separately.

    If (i) holds and $C_0 = (q,\vec{y})$ then either $q \neq \qi$ or $y_j \neq 0$, for some $1 \leq j \leq n$. In the first case, for all $v \in
    \Val(\Mt_\mC)$, we have $(\step_q)_{\Mt_\mC}(\init,v(\enc(\vec{y}))) = \error$. In the second case, $v_{y_j - 1}(\enc(y_j)) \notin \{\n_{=0},
    \n_{\geq 0}\}$ and $(\step_{\qi})_{\Mt_\mC}(\init,v_{y_j - 1}(\enc(\vec{y}))) = \error$.

    If (ii) holds and $C_k = (q,\vec{y})$ then $q$ is not a halting state and
    $v(\varphi) \in \{\error, \conf_{q,v(\vec{y})}\} \subseteq A \setminus D$, 
    for all $v \in \Val(\Mt_\mC)$.

    If (iii) holds then, by proposition~\ref{prop:NotRespectingNxtEntailsError}, there is $v \in \Val(\Mt_\mC)$ such that 
    $v(\seq(\tuple{C_0,\dots,C_{i+1}})) = \error$.

    In any of the cases, there is some $v \in \Val(\Mt_\mC)$ such that $v(\varphi) \notin D$, so $\emptyset\not\sder_{\Mt_\mC} \varphi$.
\end{proof}

Having seen how to refute any formula not representing a computation of $\mC$ we conclude $\Mt_\mC$ does exactly what we intended.

\begin{theorem}\label{thm:teohalt}
    Let $\mC = \tuple{n,Q,\qi,\delta}$ be a deterministic counter machine. For any formula $\varphi\in \fm{\Sigma_\mC}(P)$ we have
    $\emptyset \sder_{\Mt_\mC} \varphi$ if and only if $\varphi = \seq(\tuple{C_0,\dots,C_k})$ and $\tuple{C_0,\dots,C_k}$ is a finite
    computation of $\mC$.
\end{theorem}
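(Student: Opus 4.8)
The plan is to prove Theorem~\ref{thm:teohalt} as a consequence of the four preceding results, which together cover every case. The statement is a biconditional, so I would split it into its two directions and handle each by appealing to the machinery already established.

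For the right-to-left direction, suppose $\varphi = \seq(\tuple{C_0,\dots,C_k})$ where $\tuple{C_0,\dots,C_k}$ is the finite computation of $\mC$. This is exactly the hypothesis of Theorem~\ref{theorem:IfFiniteComputationThenTheorem}, which immediately gives $\emptyset \sder_{\Mt_\mC} \varphi$. So this direction is a one-line citation.

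For the left-to-right direction, I would argue contrapositively: assume $\varphi$ is \emph{not} of the stated form, and exhibit a valuation falsifying it. There are two ways $\varphi$ can fail to encode the finite computation. First, $\varphi$ might not be a sequence formula at all; but then, as noted in Example~\ref{ex:v}, any formula containing variables is refuted by sending the variables to $\error$, and any closed non-sequence formula is handled by Proposition~\ref{prop:RefutesNonSequences}, which yields $v(\varphi)\neq\conf_{q,\vec{y}}$ for all $v$, so $v(\varphi)\in\Rm\cup\{\init,\error\}\subseteq A\setminus D$. Second, $\varphi = \seq(\tuple{C_0,\dots,C_k})$ may be a genuine sequence formula that is simply not the computation of $\mC$; this is precisely the hypothesis of Proposition~\ref{prop:RefutesNonComputations}, giving $\emptyset\not\sder_{\Mt_\mC}\varphi$. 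In every case there is a valuation assigning $\varphi$ a non-designated value, so $\emptyset\not\sder_{\Mt_\mC}\varphi$, completing the contrapositive.

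The only real subtlety, and what I expect to be the main point requiring care, is making the case analysis genuinely exhaustive and ensuring the handling of formulas containing propositional variables is not overlooked. The earlier propositions are all phrased for closed formulas in $\fm{\Sigma_\mC}(\emptyset)$, so I must first dispose of formulas in $\fm{\Sigma_\mC}(P)$ that contain variables by the infectious-$\error$ argument before reducing to the closed case. Beyond that, the theorem is essentially an assembly of Theorem~\ref{theorem:IfFiniteComputationThenTheorem} and Propositions~\ref{prop:RefutesNonSequences} and~\ref{prop:RefutesNonComputations}, so no new computation is needed.
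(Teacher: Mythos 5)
Your proposal is correct and follows essentially the same route as the paper: the right-to-left direction is Theorem~\ref{theorem:IfFiniteComputationThenTheorem}, and the left-to-right direction disposes of formulas with variables via the infectious $\error$ argument from Example~\ref{ex:v}, then applies Proposition~\ref{prop:RefutesNonSequences} and Proposition~\ref{prop:RefutesNonComputations}. Your contrapositive phrasing and the explicit check that non-$\conf$ values lie outside $D$ are only cosmetic differences from the paper's direct presentation.
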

\begin{proof}
    From right to left, if $\tuple{C_0,\dots,C_k}$ is a finite computation of $\mC$ and $\varphi = \seq(\tuple{C_0,\dots,C_k})$ then, by
    theorem~\ref{theorem:IfFiniteComputationThenTheorem}, we have that $\emptyset \sder_{\Mt_\mC} \varphi$. In the other direction, suppose
    $\emptyset \sder_{\Mt_\mC} \varphi$ then, as discussed in example~\ref{ex:v}, $\varphi$ must be a closed formula. By
    proposition~\ref{prop:RefutesNonSequences}, $\varphi = \seq(\tuple{C_0,\dots,C_k})$ for some sequence of configurations $\seq(\tuple{C_0,\dots,C_k})$,
    and, by proposition~\ref{prop:RefutesNonComputations}, $\seq(\tuple{C_0,\dots,C_k})$ must be the computation of $\mC$.
\end{proof}

\subsection*{From theoremhood to monadicity}
In order to obtain the announced undecidability result we need one last construction.
We will show how to build an Nmatrix $\Mt_{\mathsf{m}}$ from an Nmatrix $\Mt$, under certain conditions, such that $\Mt_{\mathsf{m}}$ is monadic if and 
only if $\sder_{\Mt}$ has theorems. 

Given a finite $\Sigma$-Nmatrix $\Mt=\tuple{A,\cdot_\Mt,D}$, let $\Sigma_{\mathsf{m}}$ be such that $\Sigma_{\mathsf{m}}^{(2)} = \Sigma^{(2)} \cup
\{f_a : a \in A \}$ and $\Sigma_{\mathsf{m}}^{(k)} = \Sigma^{(k)}$, for every $k \neq 2$.

Let $A_{\mathsf{m}}=A\cup \{1\}$, assuming w.l.g. that $1\notin A$, consider  $\Mt_{\mathsf{m}}=\tuple{A_{\mathsf{m}},\cdot_{\mathsf{m}},\{1\}}$ the 
$\Sigma_{\mathsf{m}}$-Nmatrix where, for each $g \in \Sigma^{(k)}$,
\begin{equation*}
    g_{\mathsf{m}}(x_1,\ldots,x_k)=
    \begin{cases}
        g_\Mt(x_1,\ldots,x_k)& \text{ if } x_1,\ldots,x_k\in A\\
        A_{\mathsf{m}}&\text{ otherwise }
    \end{cases}
\end{equation*}
and, for every $a \in A$,
\begin{equation*}
    (f_{a})_{\mathsf{m}}(x,y)=
    \begin{cases}
        \{1\}& \text{ if } x=a\text{ and }y\in D\\
        \; A& \text{ if }  x \in A\setminus \{a\}\text{ and }y\in D\\
        \; A_{\mathsf{m}}&\text{ otherwise }
    \end{cases}
\end{equation*}

The following theorem targets Nmatrices with infectious values. Recall that $*$ is infectious in $\Mt$ if for every connective $\conn$ in the 
signature of $\Mt$ we have $\conn_{\Mt}(x_1,\ldots,x_k)=*$ whenever $*\in \{x_1,\ldots,x_k\}$.

\begin{proposition}\label{prop:TheoremsIifMonadic}
    Given Nmatrix  $\Mt$ with at least two truth-values and among them an infectious non-designated value, $\sder_\Mt$ has theorems if and only if
    $\Mt_{\mathsf{m}}$ is monadic.
\end{proposition}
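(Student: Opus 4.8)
\emph{Setup and a preliminary observation.} Throughout let $*$ denote the given infectious non-designated value of $\Mt$, and recall that the only designated value of $\Mt_{\mathsf{m}}$ is $1$. Hence to separate two values $x,y\in A_{\mathsf{m}}$ one of $\varphi_{\Mt_{\mathsf{m}}}(x),\varphi_{\Mt_{\mathsf{m}}}(y)$ must be the singleton $\{1\}$ while the other must avoid $1$. I will use repeatedly the following easy fact: for a \emph{pure} formula $\gamma\in L_\Sigma(\{p\})$ (one built without the new connectives $f_a$) and any $v\in\Val(\Mt_{\mathsf{m}})$ with $v(p)\in A$, an induction on $\Sub(\gamma)$ shows that $v$ restricts to a valuation of $\Mt$; in particular $v(\gamma)\in A$, and if $\gamma$ is a theorem of $\Mt$ then $v(\gamma)\in D$. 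For the left-to-right implication, suppose $\sder_\Mt$ has a theorem; collapsing all its variables to $p$ (the consequence relation of an Nmatrix is structural) yields a theorem $\psi\in L_\Sigma(\{p\})$. I claim $\{p\}\cup\{f_a(p,\psi):a\in A\}$ separates $\Mt_{\mathsf{m}}$. Indeed $p$ separates $1$ from each $a\in A$, as $p_{\Mt_{\mathsf{m}}}(1)=\{1\}$ and $p_{\Mt_{\mathsf{m}}}(a)=\{a\}\subseteq A$. For distinct $a,b\in A$ and $v$ with $v(p)=a$, the observation gives $v(\psi)\in D$, so $v(f_a(p,\psi))\in (f_a)_{\mathsf{m}}(a,v(\psi))=\{1\}$; for $v(p)=b$ we again get $v(\psi)\in D$ but now $(f_a)_{\mathsf{m}}(b,v(\psi))=A$. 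Thus $f_a(p,\psi)_{\Mt_{\mathsf{m}}}(a)=\{1\}$ and $f_a(p,\psi)_{\Mt_{\mathsf{m}}}(b)\subseteq A$, and every pair of distinct values of $A_{\mathsf{m}}$ is separated.

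\emph{The converse, contrapositively.} Assume $\sder_\Mt$ has no theorem; I show that no unary formula separates $*$ from any $b\in A$ with $b\neq *$ (such $b$ exists since $|A|\ge 2$), so $\Mt_{\mathsf{m}}$ is not monadic. The engine is the following claim, proved by induction on the structure of $\delta\in L_{\Sigma_{\mathsf{m}}}(\{p\})$: there is $v\in\Val(\Mt_{\mathsf{m}})$ with $v(p)=*$ and $v(\delta)\notin D$; moreover, if $p$ occurs in $\delta$ one can further achieve $v(\delta)=*$. The base cases are the variable $p$ (take $v(p)=*$) and the closed pure subformulas: a closed pure $\gamma$ that is not a theorem satisfies $\gamma_\Mt()\not\subseteq D$, so some valuation of $\Mt$, extended by $v(p)=*$, witnesses $v(\gamma)\notin D$ --- this is precisely where the absence of theorems is used. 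For the step, a subformula headed by some $f_c$ has, by the induction hypothesis on its second argument, a valuation making that argument non-designated; the corresponding output set is then all of $A_{\mathsf{m}}$, so one may pick the value $*$. A subformula $g(\delta_1,\dots,\delta_k)$ with $g\in\Sigma$ in which $p$ occurs has a child $\delta_j$ containing $p$, to which the induction hypothesis assigns $*$; since $g_{\mathsf{m}}$ agrees with $g_\Mt$ on $A$-inputs and equals $A_{\mathsf{m}}$ otherwise, infectiousness of $*$ in $\Mt$ propagates $*$ to the output.

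\emph{Finishing the converse.} Granting the claim, suppose some unary $\varphi$ separated $*$ from $b$; then $\varphi$ forces $1$ on one of $*,b$. Now $\varphi_{\Mt_{\mathsf{m}}}(x)=\{1\}$ is possible only when the head of $\varphi$ is some $f_c$: a connective $g\in\Sigma$ never has output set exactly $\{1\}$ (it is contained in $A$ or equals $A_{\mathsf{m}}$), and $(f_c)_{\mathsf{m}}$ returns the singleton $\{1\}$ only in its first clause, which requires its second argument designated. Write $\varphi=f_c(\alpha,\beta)$. If $\varphi$ forces $1$ on $*$ then already $\beta_{\Mt_{\mathsf{m}}}(*)\subseteq D$. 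If instead $\varphi$ forces $1$ on $b$ while keeping $1$ out of $\varphi_{\Mt_{\mathsf{m}}}(*)$, then for every $v$ with $v(p)=*$ the set $(f_c)_{\mathsf{m}}(v(\alpha),v(\beta))$ must avoid $1$, which happens only in the second clause and hence again forces $\beta_{\Mt_{\mathsf{m}}}(*)\subseteq D$. Either way we contradict the claim applied to $\beta$. Hence $*$ and $b$ are not separable and $\Mt_{\mathsf{m}}$ is not monadic.

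\emph{Main obstacle.} The routine direction is left-to-right; the delicate part is the inductive claim of the second paragraph, which is exactly where both hypotheses are indispensable. One must simultaneously push the poisoning value $*$ upward through the new connectives $f_c$ (via their catch-all output $A_{\mathsf{m}}$) and through the original connectives (via infectiousness of $*$), while the no-theorem hypothesis is what lets the induction start on closed pure subformulas, where $*$ cannot be injected through the shared variable $p$. Getting the bookkeeping of this simultaneous propagation right, and correctly pinning down that forcing the designated value $1$ can only occur at an $f_c$-head, are the two points that require care.
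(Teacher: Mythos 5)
Your forward direction is correct and essentially the paper's: the separator set $\{p\}\cup\{f_a(p,\psi):a\in A\}$ is exactly the one used there (the paper observes directly that any theorem must be closed because of the infectious $*$, rather than substituting variables by $p$, but this changes nothing). Your finishing argument for the converse is also sound: forcing the value $1$ can indeed only happen at an $f_c$-head, and both relevant clauses of $(f_c)_{\mathsf{m}}$ require the second argument to be designated under every valuation sending $p$ to $*$.

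The problem is in the inductive claim itself: as stated, the induction does not close. Your cases are (a) the variable $p$, (b) closed pure formulas, (c) formulas headed by some $f_c$, and (d) formulas headed by $g\in\Sigma$ in which $p$ occurs. This misses formulas headed by $g\in\Sigma$ in which $p$ does \emph{not} occur but which are not pure, e.g.\ $\delta=g(f_c(\gamma_1,\gamma_2))$ with $\gamma_1,\gamma_2$ closed. For such a $\delta$ the inductive hypothesis only gives you a valuation with $v(\delta_j)\notin D$ for the relevant child, and that is too weak: $g_\Mt$ applied to non-designated inputs can perfectly well output a designated value, so you cannot conclude $v(\delta)\notin D$. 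The conclusion is still true, and the fix is local: strengthen the second half of your claim from ``if $p$ occurs in $\delta$ one can achieve $v(\delta)=*$'' to ``if $\delta$ is not a closed pure formula one can achieve $v(\delta)=*$.'' Your $f_c$-case already delivers $v(\delta)=*$ unconditionally, and in the $g$-case one child is then either $p$-containing or impure, so it can be driven to $*$ and infectiousness (or the catch-all output $A_{\mathsf{m}}$, if some other child takes value $1$) propagates $*$ upward. The paper avoids the induction altogether: it picks a minimal subformula $f_a(\psi_1,\psi_2)$ with $\psi_1,\psi_2$ pure, shows $(f_a(\psi_1,\psi_2))_{\Mt_{\mathsf{m}}}(*)=A_{\mathsf{m}}$ (using infectiousness if $p$ occurs in $\psi_2$, and the no-theorems hypothesis otherwise), and then uses the fact that once $1$ appears in the value set of a proper subformula the whole formula's value set is $A_{\mathsf{m}}$, which contains both designated and non-designated elements and hence separates nothing.
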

\begin{proof}
    Let us denote the infectious non-designated value of $\Mt$ by $*$.
    Clearly, $*$ ceases to be infectious in $\Mt_{\mathsf{m}}$ as $(f_a)_{\Mt_{\mathsf{m}}}$ does not necessarily output $*$ when it receives it as 
    input. 
    The value $1$ is also not infectious in $\Mt_{\mathsf{m}}$, quite the opposite, when given as input to any connective the output can take any value.
    That is, for every connective $\conn\in \Sigma_{\mathsf{m}}$ we have $\conn_{\Mt_{\mathsf{m}}}(x_1,\ldots,x_k)=A_\mathsf{m}$ whenever $1\in \{x_1,
    \ldots,x_k\}$. This immediately implies that if $\psi\in \Sub(\varphi)\setminus \{\varphi\}$ and $1\in \psi_{\Mt_\mathsf{m}}(x)$ then
    $\varphi_{\Mt_\mathsf{m}}(x)=A_\mathsf{m}$ for any $x\in A_\mathsf{m}$.
     
    If $\emptyset \sder_\Mt\varphi$ then $\varphi$ must be a closed formula due to the presence of $*$.
    Hence, for $v\in\Val(\Mt_{\mathsf{m}})$ we have $v(\varphi)\in D$.
    Thus, $\{p\}\cup \{f_a(p,\varphi):a\in A\}$ is a set of monadic separators for $\Mt_{\mathsf{m}}$, as $p$ separates $1$ from the elements in $A$, and
    $f_a(p,\varphi)$ separates $a$ from every $b\in A$.
      
    If instead there are no theorems in $\sder_\Mt$, let us consider an arbitrary monadic formula $\varphi\in L_{\Sigma_{\mathsf{m}}}(\{p\})$ and show it 
    cannot separate $*$ from the other elements of $A$.
    We need to consider two cases.
%
%
      \begin{itemize}
          \item  If $\varphi\in L_{\Sigma}(\{p\})$ then $\varphi_{\Mt_{\mathsf{m}}}(a)=\varphi_{\Mt}(a)\subseteq A\not\ni 1$ for every $a\in A$. 
              In which case $\varphi$ cannot separate any pair of distinct elements of $A$ and, in particular, cannot separate $*$ from any other
              element of $A$.
         \item If $\varphi\in L_{\Sigma_{\mathsf{m}}}(\{p\})\setminus L_{\Sigma}(\{p\})$ 
              then there is $f_a(\psi_1,\psi_2)\in \Sub(\varphi)$ with $\psi_1,\psi_2\in L_{\Sigma}(\{p\})$.
              If $p$ occurs in $\psi_2$ then $(\psi_2)_{\Mt_\mathsf{m}}(*)=(\psi_2)_{\Mt}(*)=\{*\}$ and
              $(f_a(\psi_1,\psi_2))_{\Mt_{\mathsf{m}}}(*)= A_{\mathsf{m}}$, since $* \notin D$.
             If $p$ does not occur in $\psi_2$ then, since $\emptyset\not\sder_\Mt\psi_2$, $(\psi_2)_{\Mt}\cap (A\setminus D)\neq \emptyset$ and 
              we also obtain $(f_a(\psi_1,\psi_2))_{\Mt_{\mathsf{m}}}(*)= A_{\mathsf{m}}$.
    
              Therefore, $\varphi_{\Mt_{\mathsf{m}}}(*)= A_{\mathsf{m}}$ since either $\varphi=f_a(\psi_1,\psi_2)$ or $f_a(\psi_1,\psi_2)\in \Sub(\varphi)
             \setminus \{\varphi\}$ and $1\in(f_a(\psi_1,\psi_2))_{\Mt_{\mathsf{m}}}(*)$. As $\varphi_{\Mt_{\mathsf{m}}}(*)$ contains both designated and 
            non-designated elements it cannot separate $*$ from any other element of $A$.
      \end{itemize}

    As we are assuming that $A$ has at least two elements, we conclude that $\Mt_{\mathsf{m}}$ is not monadic.
\end{proof}

Finally, we get to the main result of this paper.

\begin{theorem}
    The problem of determining if a given finite $\Sigma$-Nmatrix is monadic is undecidable.
\end{theorem}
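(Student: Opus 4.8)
The plan is to assemble the machinery developed throughout the paper into a single reduction from the halting problem for counter machines (Theorem~\ref{thm:Halt}) to the monadicity problem for finite Nmatrices. Suppose, for contradiction, that there is an algorithm deciding whether an arbitrary finite $\Sigma$-Nmatrix is monadic. I would then exhibit an effective procedure that, given a deterministic counter machine $\mC$, decides whether $\mC$ halts when all counters start at zero, contradicting Theorem~\ref{thm:Halt}.

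First I would recall that the whole construction is effective: given $\mC=\tuple{n,Q,\qi,\delta}$, the signature $\Sigma_\mC$ and the finite Nmatrix $\Mt_\mC$ are built uniformly and computably from the finite data of $\mC$ (the set $A=\Rm\cup\Conf\cup\{\init,\error\}$ is finite, and all interpretation tables are determined by $\delta$). Next I would apply the $(\cdot)_{\mathsf{m}}$ transformation to obtain the finite $\Sigma_{\mathsf{m}}$-Nmatrix $(\Mt_\mC)_{\mathsf{m}}$; this transformation is again purely syntactic and hence computable. The key point is that $\Mt_\mC$ meets the hypotheses of Proposition~\ref{prop:TheoremsIifMonadic}: it has at least two truth-values, and $\error$ is an infectious non-designated value (by construction $\suc_{\Mt_\mC}$ and each $(\step_q)_{\Mt_\mC}$ return $\error$ whenever any argument is $\error$, and $\error\notin D$).

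With these observations in place, the chain of equivalences closes the argument. By Proposition~\ref{prop:TheoremsIifMonadic}, $(\Mt_\mC)_{\mathsf{m}}$ is monadic if and only if $\sder_{\Mt_\mC}$ has a theorem. By Theorem~\ref{thm:teohalt}, $\sder_{\Mt_\mC}$ has a theorem if and only if there is a formula $\seq(\tuple{C_0,\dots,C_k})$ encoding a finite computation of $\mC$, which happens precisely when $\mC$ halts starting from all counters zero. Hence $\mC$ halts if and only if $(\Mt_\mC)_{\mathsf{m}}$ is monadic. Feeding $(\Mt_\mC)_{\mathsf{m}}$ to the hypothetical monadicity-deciding algorithm would therefore decide halting for $\mC$, contradicting Theorem~\ref{thm:Halt}. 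I would conclude that no such algorithm exists, i.e.\ monadicity of finite Nmatrices is undecidable.

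I do not anticipate a serious technical obstacle here, since all the substantive work has already been discharged in the preceding results: the soundness and completeness of the encoding (Theorem~\ref{thm:teohalt}) and the theorems-versus-monadicity bridge (Proposition~\ref{prop:TheoremsIifMonadic}). The only points demanding care are bookkeeping ones, and the mildest subtlety worth stating explicitly is that both $\mC\mapsto\Mt_\mC$ and $\Mt\mapsto\Mt_{\mathsf{m}}$ are genuinely effective maps on finite presentations, so their composition is computable; this is what licenses transporting undecidability from the halting problem to monadicity rather than merely establishing a non-effective biconditional.
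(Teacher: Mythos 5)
Your proof is correct and follows essentially the same route as the paper's: it verifies that $\Mt_\mC$ satisfies the hypotheses of Proposition~\ref{prop:TheoremsIifMonadic} ($\error$ infectious and non-designated, at least two truth-values), chains Theorem~\ref{thm:teohalt} with Proposition~\ref{prop:TheoremsIifMonadic} to get ``$\mC$ halts iff $(\Mt_\mC)_{\mathsf{m}}$ is monadic,'' and notes the effectiveness and finiteness of the constructions before invoking Theorem~\ref{thm:Halt}. The only difference is presentational (you phrase it as a contradiction rather than a direct reduction), which changes nothing of substance.
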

\begin{proof}
    For every counter machine $\mC$, the Nmatrix $\Mt_{\mC}$ is in the conditions of Theorem~\ref{prop:TheoremsIifMonadic}, as it has more than two 
    truth-values and $\error$ is infectious.
    Therefore, by applying successively Theorem~\ref{thm:teohalt} and~\ref{prop:TheoremsIifMonadic}, we reduce the halting problem for counter machines
    to the problem of checking if a finite Nmatrix is monadic.
    Indeed, for a given counter machine $\mC$, $\mC$ halts if and only if $\sder_{\Mt_\mC}$ has theorems if and only if $(\Mt_\mC)_{\mathsf{m}}$ is monadic.
    Furthermore, the presented constructions are all computable and $(\Mt_\mC)_{\mathsf{m}}$ is always finite since, if $\mC$ has $m$ states and $n$ 
    counters, then $\Mt_\mC$ has $m \times 4^n + 6$ truth-values and $\Sigma_\mC$ has $3+m$ connectives. Therefore, $(\Mt_\mC)_{\mathsf{m}}$ has
    $m \times 4^n + 7$ truth-values and $(\Sigma_{\mathcal{C}})_{\mathsf{m}}$ has $m \times 4^n + m + 9$ connectives. We can therefore conclude the proof just by 
    invoking Theorem~\ref{thm:Halt}.
\end{proof}

As a simple corollary we obtain that following result about Nmatrices, or better, about their underlying multi-algebras.

\begin{corollary} 
The problem of generating all expressible unary multi-functions in an arbitrary finite Nmatrix is not computable.
\end{corollary}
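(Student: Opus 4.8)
The plan is to derive this corollary directly from the undecidability of monadicity just established. The key observation is that monadicity is nothing more than a property of the set of expressible unary multi-functions: an Nmatrix $\Mt$ is monadic precisely when, for every pair of distinct truth-values $a,b$, the family of unary multi-functions expressible in $\Mt$ contains some $f$ with $f(a)\# f(b)$. Thus, if we could effectively generate all expressible unary multi-functions of a given finite Nmatrix, we could settle monadicity by the following finite check: compute the (necessarily finite, since there are finitely many multi-functions $A\to\wp(A)\setminus\{\emptyset\}$) set of expressible unary multi-functions, and then, for each of the finitely many pairs $a,b$ of distinct truth-values, search this set for a separator.

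First I would make precise what ``generating all expressible unary multi-functions'' means: an algorithm that, given a finite Nmatrix $\Mt$, outputs the finite set $\{\varphi_{\Mt}:\varphi\in L_{\Sigma}(\{p\})\}$ of unary multi-functions (restricting each $\varphi_{\Mt}$ to its action on the single variable $p$). Note that this target set is always finite, bounded by the number of functions from $A$ to $\wp(A)\setminus\{\emptyset\}$, so the only computational difficulty is deciding membership, not enumerating an infinite object. Then I would argue the contrapositive: assuming such a generating algorithm exists, I would exhibit the decision procedure for monadicity sketched above, which terminates because both the generated set and the collection of pairs of truth-values are finite, and which is correct by the very definition of monadic separator. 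This contradicts the preceding theorem, so no such generating algorithm can exist.

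The only subtle point, and the step I expect to require the most care, is confirming that the decision procedure really only needs the set of expressible unary multi-functions and nothing syntactic. Here it is essential that monadicity, as defined in the Preliminaries, quantifies over the \emph{multi-functions} $\varphi_{\Mt}$ rather than over the formulas themselves: the set of monadic separators exists iff the required separating multi-functions are present among the expressible ones. This is a genuine concern precisely because, as Example~\ref{example:SubformulaDependenceOnExpressedFunctions} stresses, in the non-deterministic setting one cannot in general reconstruct which multi-functions are expressible by composing previously expressible ones. However, that sensitivity to syntax is exactly what makes the \emph{generation} problem hard; once a correct generator is granted as a hypothesis, the downstream monadicity check uses only the output set of multi-functions and is manifestly effective. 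I would close by noting that this establishes the uncomputability of the set of expressible unary multi-functions, and remark that a fortiori the set of all expressible multi-functions of arbitrary arity is not computable either.
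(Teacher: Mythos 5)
Your proposal is correct and follows essentially the same route as the paper: assume a generator for the (necessarily finite) set of expressible unary multi-functions, then decide monadicity by exhaustively testing each expressible multi-function against each pair of distinct truth-values, contradicting the undecidability theorem. Your additional observation that the monadicity check depends only on the multi-functions $\varphi_{\Mt}$ and not on the formulas producing them is a worthwhile clarification of a point the paper's one-line proof leaves implicit, but it does not change the argument.
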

\begin{proof}
    Just note that if we could compute the set of all expressible unary multi-functions, as the set is necessarily finite, we could test each of them
    for the separation of values, as illustrated in the case of matrices in Example~\ref{ex:luk2}.
\end{proof}

\section{Conclusion}\label{sec:conclude}

In this paper we have shown that, contrarily to the most common case of logical matrices, the monadicity property is undecidable for non-deterministic 
matrices. As a consequence, we conclude that the set of all multi-functions expressible in a given finite Nmatrix is not computable, in general. These 
results, of course, do not spoil the usefulness of the techniques for obtaining axiomatizations, analytical calculi and automated proof-search for 
monadic non-deterministic matrices. 
This is especially the case since, for a given Nmatrix, one can always define a monadic Nmatrix over an enriched signature, such that its logic is a conservative
extension of the logic of the previous Nmatrix, as described in~\cite{synthese}.
The results show, however, that tool support for logics based on non-deterministic matrices must necessarily have its limitations.

On a closer perspective, the reduction we have obtained from counter machines to Nmatrices (of which non-determinism is a fundamental ingredient) just 
adds to the initial perception that allowing for non-determinism brings a substantial amount of expressive power to logical matrices. Concretely, it opens the door 
for studying the computational hardness of other fundamental meta-theoretical questions regarding logics defined by Nmatrices. In particular, we will be 
interested in studying the problem of deciding whether two given finite Nmatrices define the same logic, a fundamental question raised by Zohar and Avron 
in~\cite{AvronZohar}, for which only necessary or sufficient conditions are known.

Additionally, we deem it important to further explore the connections between Nmatrices and term-dag-automata (an interesting computational model for 
term languages~\cite{AutomataOnDAGRepresentationsOfFiniteTrees,ClosureProperties})
and which informed our undecidability result. Another relevant direction for further investigation is the systematic study of infectious semantics, in 
the lines of~\cite{ismvl,infect}, whose variable inclusion properties also played an important role in our results.

\bibliographystyle{eptcs}
\bibliography{biblio}
\end{document}